\documentclass[12pt,a4paper]{article}

\usepackage{amsmath,amssymb,amsthm}
\usepackage[margin=2.5cm]{geometry}
\usepackage{hyperref}
\usepackage{booktabs,multirow,array}
\usepackage{xcolor}
\usepackage{graphicx}
\usepackage{caption,subcaption}
\usepackage{enumitem}
\usepackage{parskip}
\usepackage{setspace}
\usepackage{titlesec}
\usepackage{fancyhdr}
\usepackage{tcolorbox}
\tcbuselibrary{skins,breakable}
\usepackage{siunitx}
\usepackage{mathtools}
\usepackage{natbib}
\usepackage{microtype}
\usepackage{bm}

\definecolor{trblue}{RGB}{26,86,219}
\definecolor{trred}{RGB}{224,36,36}
\definecolor{trgray}{RGB}{107,114,128}
\definecolor{trgreen}{RGB}{5,122,85}
\definecolor{trlb}{RGB}{219,234,254}
\definecolor{trlg}{RGB}{209,250,229}
\definecolor{trlr}{RGB}{254,226,226}

\titleformat{\section}{\large\bfseries\color{trblue}}{\thesection.}{0.55em}{}
\titleformat{\subsection}{\normalsize\bfseries}{\thesubsection.}{0.45em}{}
\titleformat{\subsubsection}{\normalsize\itshape}{\thesubsubsection.}{0.4em}{}
\titlespacing{\section}{0pt}{14pt}{5pt}
\titlespacing{\subsection}{0pt}{10pt}{3pt}

\newtcolorbox{resultboxM}{
  enhanced,breakable,colback=trlb,colframe=trblue,
  boxrule=1.2pt,arc=4pt,left=8pt,right=8pt,top=6pt,bottom=6pt,
  title={\small\bfseries\color{white}Manhattan (L\textsuperscript{1}) Model},
  colbacktitle=trblue,
}
\newtcolorbox{resultboxE}{
  enhanced,breakable,colback=trlr,colframe=trred,
  boxrule=1.2pt,arc=4pt,left=8pt,right=8pt,top=6pt,bottom=6pt,
  title={\small\bfseries\color{white}Euclidean (L\textsuperscript{2}) Model},
  colbacktitle=trred,
}
\newtcolorbox{intubox}[1]{
  enhanced,breakable,colback=trlg,colframe=trgreen,
  boxrule=0.9pt,arc=3pt,fontupper=\small,
  left=6pt,right=6pt,top=4pt,bottom=4pt,
  title={\small\bfseries #1},
  colbacktitle=trgreen,coltitle=white,
  attach boxed title to top left={yshift=-2mm,xshift=5mm},
}

\hypersetup{colorlinks=true,linkcolor=trblue!80!black,
            citecolor=trblue!80!black,urlcolor=trblue!80!black}

\theoremstyle{plain}
\newtheorem{theorem}{Theorem}
\newtheorem{proposition}{Proposition}
\theoremstyle{definition}
\newtheorem{definition}{Definition}
\theoremstyle{remark}
\newtheorem{remark}{Remark}

\newcommand{\W}{W_1}
\newcommand{\gM}{g_{\rm M}(\kappa)}
\newcommand{\EVKT}{\mathbb{E}[VKT]}
\newcommand{\dd}{\,\mathrm{d}}

\newtheorem{corollary}[theorem]{Corollary} 
\usepackage{float}
\usepackage{makecell}

\begin{document}

\begin{titlepage}
\vspace*{1.8cm}
{\centering
{\LARGE\bfseries
Approximation Models for Shared Mobility Rebalancing\\[6pt]
Under Structured Spatial Imbalance\\[6pt]
}


\hrule height 0.8pt
\vspace{7pt}
{\small\color{trgray}
\textbf{Keywords:}
shared mobility; vehicle rebalancing; optimal transport; Earth Mover's Distance; Wasserstein
distance; Manhattan metric; Euclidean metric; imbalance index; approximation model
}\vspace{7pt}
\hrule height 0.8pt}

\vspace{18pt}

{\centering
\textbf{Wenbo Fan}\textsuperscript{1,*} \quad \textbf{Zhouyun Chen}\textsuperscript{2}, \quad \textbf{Weihua Gu\textsuperscript{1}}

\vspace{8pt}
{\small
\textsuperscript{1}Department of Electrical and Electronic Engineering,
The Hong Kong Polytechnic University, Hong Kong SAR\\
\textsuperscript{2}School of Transportation and Logistics,
Southwest Jiaotong University, Chengdu, China\\[4pt]
\textsuperscript{*}Corresponding author: \href{mailto:wenbo.fan@polyu.edu.hk}{wenbo.fan@polyu.edu.hk}
}
\par}

\vspace{16pt}

\begin{abstract}
\noindent
Shared mobility systems (e.g., shared cars and ride-hailing services) generate persistent spatial imbalances as vehicles concentrate at popular destinations, leaving trip origins depleted of supply.
Operators incur substantial costs in repositioning empty vehicles, and quantifying the theoretical minimum of this rebalancing distance is practically important. 
Exact computation requires solving a transportation linear program that is challenging at the city scale.

Closed-form approximation models are derived for the minimum rebalancing distance in rectangular service regions.
Parallel derivations are presented for the Manhattan metric (grid road networks) and the Euclidean metric (unconstrained movement).
A scalar spatial imbalance index condenses the full demand pattern into a single interpretable quantity.
Both models share a unified structure: the per-vehicle rebalancing distance scales with the square root of service area, the imbalance index, and a shape factor that depends solely on the aspect ratio.
Calibration and validation against 500 exact LP solutions per metric confirm the area-scaling exponent to within 2\% of the theoretical prediction, across three demand distribution families.

An empirical case study using January 2026 New York City for-hire vehicle trip data across 263 traffic analysis zones confirms that the formula generalizes to real-world, network-constrained demand.
The results equip operators and system designers with solver-free, theoretically grounded tools for benchmarking rebalancing performance and optimizing service, rebalancing frequency, and demand-management interventions.
\end{abstract}
\end{titlepage}

\setcounter{page}{1}
\clearpage

\section{Introduction}
\label{sec:intro}

Shared mobility systems such as carsharing, bike-share, and ride-hailing offer flexible, on-demand transport services that conventional fixed-line public transport cannot provide \citep{shaheen2008carsharing,shaheen2019shared}. Flexibility, however, comes with a persistent operational challenge --- \emph{spatial imbalance} ---  vehicles strand wherever passengers leave them, causing supply and demand to fall progressively out of alignment. To maintain efficient fleet utilization, operators must frequently correct this imbalance by repositioning idle vehicles from surplus to deficit areas, and these movements constitute a
significant share of total operating costs \citep{he_robust_2020,nair_fleet_2011,gottschalg_dynamic_2026}.
This need for active \emph{rebalancing} marks a fundamental operational distinction between shared mobility systems and conventional
fixed-line public transport, whose vehicles return naturally to their starting points along prescribed routes.

Computing the minimum achievable rebalancing cost (e.g., vehicle kilometers traveled, VKT) exactly is highly desirable but is computationally and memory demanding. It requires solving a transportation linear program (LP) with $O(n^3 \log n)$ complexity in the number of
locations~$n$ \citep{meng2025nnsemd}, which can be burdensome for iterative city-scale planning. Consequently, practical studies have largely resorted to numerical optimization restricted to small- and medium-sized instances \citep{raviv2013optimal, guo_robust_2021}, heuristic routing models \citep{schuijbroek2017inventory}, and computationally tractable approximations \citep{braverman2019emptycar, beirigo2021learning}.
A parsimonious model of this repositioning cost would therefore be valuable: it not only informs operational strategies for shared mobility services but also enables insightful comparisons across transport modes under varying conditions.

Analytical models have been established for relevant transportation problems, such as the traveling salesman problem (TSP) and the vehicle routing problem (VRP) in one-to-many settings, where all vehicles start and end at a common depot \citep{beardwood1959shortest, daganzo1984tours, daganzo1984vrp, daganzo1987a, daganzo1987b, estrada2004formulas, lei2024subset}. These results, however, do not extend to rebalancing problems that feature many-to-many settings.
In the latter setting, \cite{daganzo2000asymptotic,daganzo2004bounds} studied stochastic systems in which the net supply has zero mean and whose variance captures the local supply--demand mismatch, leaving observable (or predictable) systematic imbalance unaddressed. The random bipartite matching problem (RBMP) literature \citep{caracciolo2014scaling, caracciolo2015correlation, shen2024rbmp, zhai2026rbmp1d} derives closed-form approximations for expected matching cost in specific regimes, yet does not address shared vehicle rebalancing, which is inherently an Optimal Transport (OT) problem between empirical supply and demand distributions rather than a combinatorial assignment over two fixed point sets.

This paper fills that gap with three contributions. {First, it introduces a scalar \emph{imbalance index} that quantifies structured supply-demand mismatch and accordingly, derives unified closed-form approximation models for minimum rebalancing distance under both Manhattan and Euclidean metrics. Second, it proves a rigorous upper bound on minimum rebalancing distance. Third, the models are validated against 500 exact LP solutions per metric and a January 2026 New York City for-hire vehicle case study.


The remainder of the paper is organized as follows.
Section~\ref{sec:litreview} reviews related literature, and
Section~\ref{sec:method} formalizes the problem and presents the Manhattan
and Euclidean derivations. Section~\ref{sec:numerical} reports calibration, validation,
and robustness results, Section~\ref{sec:case_study} presents the New York
City case study, and Section~\ref{sec:conclusion} concludes with practical
design guidelines.

\section{Literature Review}
\label{sec:litreview}

The development of analytical approximate models for transport costs follows a natural progression: from one-to-many systems anchored at a common depot, such as the TSP and VRP, to many-to-many systems, such as OT and RBMP. This section mirrors that progression, first reviewing the analytical tradition of closed-form routing formulae for the TSP and VRP, then turning to OT and RBMP, and finally identifying the gap that the present paper addresses.

\subsection{Closed-form routing formulae for one-to-many systems}
\label{subsec:tsp_vrp}

The program of expressing routing costs in closed form begins with
\citet{beardwood1959shortest}, who proved that the length of the optimal
traveling salesman tour through $N$ points drawn uniformly at random from a
bounded planar region of area~$R$ satisfies
$L_{\rm TSP} \sim \beta_{\rm TSP}\sqrt{NR}$ almost surely as $N \to \infty$,
with $\beta_{\rm TSP}$ a universal constant.
This result---hereafter BHH---established the core paradigm: a routing cost
can be expressed as a closed-form product of a universal constant, a
density-dependent factor, and an area term, independent of the detailed point
configuration for large~$N$.

\citet{daganzo1984tours} extended the BHH paradigm to account for region shape.
For a rectangular region of dimensions $L \times W$ with $N$ uniformly scattered
points, he showed that elongated regions incur longer tours than compact ones at
equal area and density, and derived an approximate formula capturing this
effect.
The formula transitions between a $\sqrt{R}$ regime for nearly square regions
and a strip-dominated regime, in which tour length
also depends on the aspect ratio (region width in the original formula).
\citet{daganzo1984vrp} derived analogous formulae for the VRP with a central
depot and vehicle capacity~$C$, obtaining
$D_{\rm VRP} \approx 2rN/C + 0.57\sqrt{NR}$, where $r$ is the average
depot--customer distance and the second term retains the BHH form.
\citet{daganzo1987a,daganzo1987b} confirmed that this $\sqrt{R}$-scaling
structure is robust to delivery time windows, and \citet{estrada2004formulas}
validated and extended the VRP formulae to circular and elliptic regions.
More recently, \citet{lei2024subset} generalized the BHH result to
\emph{partial}-coverage tours, deriving a closed-form estimate of the
expected optimal tour length for visiting any subset of $n \leq N$ randomly
distributed points under both Euclidean and rectilinear (Manhattan) metrics,
using a ``trapping effect'' correction to account for the extra distance
incurred when nearby points have already been visited.

Beyond static routing, \citet{stein1978asymptotic} and
\citet{bertsimas1991stochastic} extended the analysis to open and dynamic
settings; the latter showed that the average waiting time in the dynamic
traveling repairman problem exhibits the $\sqrt{R/N}$
local structure in heavy traffic.
Collectively, this body of work demonstrates that closed-form, shape-aware
formulae are both theoretically sound and practically valuable.
Yet all of these results pertain to the TSP or VRP, in which vehicles
visit a set of points along a tour originating from a depot.
This one-to-many structure is fundamentally different from the many-to-many
transport problem, and the formulae derived for it do not transfer directly
to the rebalancing setting.

\subsection{Optimal transport in many-to-many systems}
\label{subsec:otp}

The analytical study of many-to-many transport costs was pioneered by
\citet{daganzo2000asymptotic,daganzo2004bounds}, who showed that for systems driven by stochastic supply--demand fluctuations with mean of zero net supply, the average per-node optimal transport cost in two dimensions $\langle p^* \rangle$
is bounded above by $O(\sigma\delta^{-1/2}\log N)$ as $N \to \infty$,
where $\delta = N/R$ is the demand density and $\sigma$ is the standard
deviation of the net supply at each node. 


The connection between OT theory and shared-mobility
rebalancing was made clear by \citet{spieser2016vehicle}, who formulated the
autonomous mobility-on-demand (AMoD) rebalancing problem as a fluid-flow model under time-varying demand.
They proved that in the special case of stationary demand, the rebalancing problem reduces to an Earth Mover's Distance (EMD) calculation --- equivalently, the first Wasserstein distance $W_1$ \citep{ruschendorf1985wasserstein}.


A parallel mathematical literature derives the expected value of the EMD for
randomly drawn distributions \citep{bourn2020expected,erickson2021generalization,erickson2024cayley}.
These results are elegant but are confined to one-dimensional settings and do
not extend to the two-dimensional spatial geometry of the vehicle rebalancing
problem.
On the computational side, exact evaluation of $W_1$ between two discrete
distributions over $N$ locations requires solving a linear program with $O(N^2)$
variables, incurring $O(N^3 \log N)$ worst-case complexity
\citep{meng2025nnsemd}.
\citet{meng2025nnsemd} addressed this with a nearest-neighbour-search
approximation that achieves 44--135$\times$ speedups over exact EMD solvers.
While valuable for large-scale numerical work, this approach yields no functional
relationship between $W_1$ and external parameters.

\subsection{Random bipartite matching in many-to-many systems}
\label{subsec:rbmp}

The RBMP is a special case of the OT in which $N$ equal-weight supply points are matched one-to-one with $N$ equal-weight demand points, so that the $W_1$ distance reduces to the total distance of that assignment. Whereas the OT literature above characterizes the asymptotic scaling of $W_1$ in full generality, a complementary strand seeks closed-form expressions for the expected matching distance in this more structured setting.

\citet{treleaven2013asymptotically} proved that when supply and demand
locations follow different spatial distributions, the average per-trip
matching distance converges almost surely to the Wasserstein distance between
those two distributions as the fleet size grows, with an explicit bound on
the rate of convergence.
\citet{caracciolo2014scaling} focused on the symmetric case of uniformly
distributed supply and demand, deriving the first closed-form expressions for
how the average optimal matching cost scales with the number of matched
pairs---most notably, that in two dimensions the average per-pair distance
grows as $O\!\bigl((\ln N/N)^{1/2}\bigr)$ with an analytically exact leading
coefficient.
\citet{caracciolo2015correlation} extended this further by characterizing not
just the average cost but the spatial correlations between individual match
lengths, showing these are governed by the same mathematical structure as
heat diffusion on the domain and that this structure is robust across
different choices of distance metric.

Most recent works include \citet{shen2024rbmp} and \citet{zhai2026rbmp1d}.
The former derives closed-form approximations for the expected optimal matching distance in RBMPs without spatial structure, on hyperspheres, and within bounded domains under
general $L^p$ metrics; the latter derives exact and approximate closed-form
formulas for the expected matching distance in one-dimensional balanced and
unbalanced RBMPs.

\subsection{Research gap}
\label{subsec:gap}

The foregoing review reveals a specific gap: {no closed-form approximation relates the minimum rebalancing distance in two-dimensional service regions to its area, aspect ratio, and supply-demand imbalance jointly.} This paper fills that gap with calibrated scaling approximations under both the Manhattan and Euclidean metrics, validated against exact LP solutions.

\section{Methodologies} \label{sec:method}
\subsection{Problem Formulation}
\label{sec:problem}

This section formalizes the problem setting and introduces the key notation, as summarized in Table~\ref{tab:notation}.
Consider a shared mobility system (e.g., one-way shared cars) operating within a \emph{rectangular service region} $\Omega = [0,\,L]\times[0,\,W]$, where $L \geq W > 0$. 
The region has:
\begin{itemize}[leftmargin=2em,itemsep=2pt]
  \item Area: $R = LW$ (km$^2$)
  \item Aspect ratio: $\kappa = L/W \geq 1$
        (by convention the longer dimension is $L$)
  \item Dimensions recoverable as: $L = \sqrt{\kappa R}$,\;
        $W = \sqrt{R/\kappa}$.
\end{itemize}
The rectangle admits a tractable derivation of shape effects \citep{daganzo1984tours, daganzo1984vrp} and also serves as a fundamental building block for more complex geometries (e.g., convex and concave polygons).

Over a rebalancing interval of length $h$ (hours), the system generates
$N$ trips, with $\mathbb{E}[N] = \lambda h R$, where
$\lambda$ (trips$\cdot$km$^{-2}\cdot$h$^{-1}$) is the mean areal
trip-generation rate. 
Each trip has a random \emph{origin} $\mathbf{x}_i$ (vehicle pickup)
drawn from the spatial density $f(\mathbf{x})$, and a random
\emph{destination} $\mathbf{y}_i$ (vehicle drop-off) drawn from the
spatial density $g(\mathbf{x})$.
Both $f$ and $g$ are continuous densities on $\Omega$ normalized so that
$\int_\Omega f = \int_\Omega g = 1$; they are \emph{not} assumed to be
equal or uniform.
The structural mismatch between $f$ and $g$ is the sole source of
systematic rebalancing need.
After all $N$ trips, location $z$ has a net surplus of vehicles if
$\sum_i \mathbf{1}[\mathbf{y}_i \in z] > \sum_i \mathbf{1}[\mathbf{x}_i \in z]$
(more drop-offs than pickups), and a net deficit otherwise.


\begin{definition}[Imbalance Index]
The \emph{imbalance index} is the total variation {distance} between the origin and destination densities ($d_{\rm TV}(f,\,g)$):
\begin{equation}
  I \;=\; d_{\rm TV}(f,\,g)
    \;=\; \frac{1}{2}\int_\Omega \bigl|f(\mathbf{x}) - g(\mathbf{x})\bigr|\dd\mathbf{x}
  \label{eq:I}
\end{equation}
\end{definition}
--- a measure of {how much} they differ in magnitude (not of how far apart they are spatially). By construction, $I \in [0,1]$.
$I=0$ implies perfect balance (no rebalancing needed);
$I=1$ implies origins and destinations are spatially disjoint
(maximum mismatch).
In discrete form, given $M$ shared-bike stations (or shared-car parking lots or ride-hailing service zones) with fractions $\{p_i\}$ (trip origins) and $\{q_i\}$ (trip destinations):
\begin{equation}
  I \;=\; \frac{1}{2}\sum_{i=1}^{M}
          \bigl|q_i - p_i\bigr|.
  \label{eq:Idiscrete}
\end{equation}
Note that $I$ represents the observed or predicted imbalance prior to rebalancing. Just as $\sigma$ governs cost in the stochastic regime \citep{daganzo2004bounds}, $I$ plays the analogous role in the deterministic one but explicitly captures the structured spatial mismatch that is absent in referred work.




After each interval of length $h$, the operator solves a {minimum-cost vehicle relocation problem}:
move empty vehicles from surplus locations to deficit locations at minimum
total distance.
This is exactly a \emph{Kantorovich optimal transport problem}
(equivalently, an Earth Mover's Distance problem) with ground cost
$d(\mathbf{x},\mathbf{y})$, where $d$ is the chosen
travel metric (Manhattan or Euclidean) \citep{villani2003topics}.
Its optimal value is the corresponding \emph{Wasserstein-1 distance},
denoted $W_{1,d}(f,g)$; once the metric is fixed, we write $W_1(f,g)$
for brevity.

The \emph{total rebalancing VKT} over interval $h$ is therefore:
\begin{equation}
  \EVKT \;=\; N \cdot W_1(f,\,g)
            \;=\; \lambda h R \cdot W_1(f,\,g).
  \label{eq:VKT_general}
\end{equation}


\begin{table}[!ht]
\centering
\caption{Table of notation used throughout the paper.}
\label{tab:notation}
\footnotesize\renewcommand{\arraystretch}{1.2}
\begin{tabular}{@{}p{3.2cm}p{5.8cm}p{2.0cm}p{2.0cm}@{}}
\toprule
\textbf{Symbol} & \textbf{Definition} & \textbf{Unit} & \textbf{Range} \\
\midrule
\multicolumn{4}{@{}l}{\textit{Region geometry}} \\[1pt]
$L,\;W$              & Rectangle side lengths ($L\geq W$)       & km            & --          \\
$R = LW$             & Service area                              & km$^2$        & 0.25--50    \\
$\kappa = L/W$       & Aspect ratio                              & --            & $\geq 1$    \\
\midrule
\multicolumn{4}{@{}l}{\textit{Demand process}} \\[1pt]
$\lambda$            & Spatial trip-generation density           & trips$\cdot$km$^{-2}$h$^{-1}$ & 1--500 \\
$h$                  & Rebalancing interval length               & h             & 1--24       \\
$N = \lambda h R$    & Expected trips per interval               & --            & -- \\
$f(\mathbf{x})$      & Normalized origin (pickup) density        & km$^{-2}$     & --          \\
$g(\mathbf{x})$      & Normalized destination (drop-off) density & km$^{-2}$     & --          \\
\midrule
\multicolumn{4}{@{}l}{\textit{Imbalance}} \\[1pt]
$I \in [0,1]$        & Imbalance index $\tfrac{1}{2}\!\int|f-g|$ & --           & --  \\
$\rho = f - g$       & Signed imbalance density                  & km$^{-2}$     & --          \\
\midrule
\multicolumn{4}{@{}l}{\textit{Rebalancing cost}} \\[1pt]
$W_1$                & Wasserstein-1 (mean rebalancing distance) & km            & --          \\
$\mathrm{VKT} = N W_1$ & Total rebalancing vehicle-kilometers    & km            & --          \\
\midrule
\multicolumn{4}{@{}l}{\textit{Model parameters}} \\[1pt]
$g_{\rm M}(\kappa)$  & Shape factor (both metrics), $(\kappa+1)/\sqrt{\kappa}$ & --   & $\geq 2$    \\
$C_{\rm M}$          & Manhattan calibration constant            & --            & --  \\
$C_{\rm E}$          & Euclidean calibration constant & --  & --  \\
$\widehat{C}_{\rm M}$& Estimated $C_{\rm M}$ (median estimator)  & --            & --          \\
$\widehat{C}_{\rm E}$& Estimated $C_{\rm E}$ (median estimator)  & --            & --          \\
\bottomrule
\end{tabular}
\end{table}

\subsection{Derivation: Manhattan (L\texorpdfstring{$^1$}{1}) Metric}
\label{sec:manhattan}

The Manhattan (or $L^1$) metric is
$\|\mathbf{x}-\mathbf{y}\|_1 = |x_1-y_1| + |x_2-y_2|$,
appropriate for vehicles navigating a dense rectilinear (grid) road network.

\subsubsection{Wasserstein-1 Formulation}

The minimum rebalancing cost in the Manhattan metric is:
\begin{equation}
  \W^{\rm M}(f,g)
  = \inf_{\gamma\in\Pi(f,g)}
    \int_{\Omega\times\Omega}
    \bigl(|x_1-y_1|+|x_2-y_2|\bigr)\dd\gamma(\mathbf{x},\mathbf{y}),
  \label{eq:MK_M}
\end{equation}
where $\gamma(\mathbf{x},\mathbf{y})$ is a joint probability measure, indicating what fraction of the total vehicle movements should go from location $\mathbf{x}$ to location $\mathbf{y}$; $\Pi(f,g)$ is the set of all joint probability measures on
$\Omega\times\Omega$ with marginals $f$ and $g$. The $\inf$ over all valid $\gamma\in\Pi(f,g)$ is asking: of all feasible rebalancing plans that satisfy supply and demand balance, which one minimizes total VKT?





\subsubsection{Beckmann's Flow Reformulation}

Instead of tracking specific vehicle itineraries, \cite{beckmann1952continuous} showed that the entire rebalancing
solution can be described as a continuous \emph{vector flux field}
$\mathbf{v}(\mathbf{x}) = (v_x(\mathbf{x}),\, v_y(\mathbf{x}))$ that is locally dependent.
At each location $\mathbf{x}$, the two components tell us the net rate at
which empty vehicles pass through in the east--west ($x$) and north--south
($y$) directions respectively. 
For the flux field to be physically valid, it must satisfy the
{divergence equation} at every interior point:

\begin{equation}
  \underbrace{\frac{\partial v_x}{\partial x}
            + \frac{\partial v_y}{\partial y}}_{\text{net outflow at }\mathbf{x}}
  \;=\;
  \underbrace{\rho(\mathbf{x})}_{\text{local surplus}}
  \qquad \text{in } \Omega,
  \label{eq:continuity}
\end{equation}

where $\rho = f - g$ is the signed imbalance density (positive in surplus
locations, negative in deficit locations), with $\int_\Omega \rho\,\mathrm{d}\mathbf{x}
= 0$ ensuring global supply--demand balance.
No flow is permitted to cross the boundary of the service region:
$\mathbf{v} \cdot \hat{n}\big|_{\partial\Omega} = 0$.

Under the Manhattan ($L^1$) metric, moving a vehicle costs proportionally
to the sum of east--west and north--south distances traveled.
The total rebalancing cost of a plan $\mathbf{v}$ is therefore
$\int_\Omega \bigl(|v_x(\mathbf{x})| + |v_y(\mathbf{x})|\bigr)\,\mathrm{d}\mathbf{x}$,
and we seek the plan that minimizes this over all flows satisfying
\eqref{eq:continuity}:

\begin{equation}
  W_1^{\rm M}(f,g) \;=\; \min_{\mathbf{v}} \int_{\Omega} \|\mathbf{v}(\mathbf{x})\|_1
  \,\mathrm{d}\mathbf{x}
  \quad \text{subject to} \quad
  \nabla \cdot \mathbf{v} = \rho \;\text{ in } \Omega, \quad
  \mathbf{v} \cdot \hat{n}\big|_{\partial\Omega} = 0.
  \label{eq:beckmann_manhattan}
\end{equation}

This is Beckmann's (1952) formulation --- a direct continuous analog of
the classical minimum-cost flow LP that transportation engineers solve on
discretised networks, with $\mathbf{v}(\mathbf{x})$ playing the role of
the arc-flow variable and \eqref{eq:continuity} playing the role of the
node-balance constraints.




\subsubsection{Decomposing in the Manhattan Metric}

\noindent
On a grid road network, a vehicle can only move
east--west \emph{or} north--south at any moment---never diagonally.
This means the total driving distance is simply
\[
  \text{total distance} \;=\; \underbrace{\text{east--west distance}}_{\text{along }x}
                            \;+\; \underbrace{\text{north--south distance}}_{\text{along }y}.
\]
Because the two directions \emph{add up}, the 2-D rebalancing cost can be {approximated}
by solving two separate 1-D problems---one per axis---and summing
the results.
This axis-projection approximation provides a \emph{lower bound} on the
true 2-D Manhattan cost.
The bound is tight when $\rho(x,y)$ is approximately \emph{additively
separable}; it is loose
when the imbalance is concentrated in ``diagonal'' modes that project
to zero on both axes simultaneously (e.g.\ a checkerboard pattern of
alternating surplus and deficit locations).
In practice, the approximation quality depends on the spatial geometry of $\rho$ and should be assessed empirically for a given system; we treat the axis-projection result as a tractable lower bound and quantify the approximation error in Sections \ref{sec:numerical} and \ref{sec:case_study}.

\noindent
{For each 1-D sub-problem,} think of the service region  $\Omega$ as a grid of city blocks.
At each location $\mathbf{x}$, define the \emph{signed imbalance density}
$\rho(\mathbf{x}) = f(\mathbf{x})-g(\mathbf{x})$, where $f$ is the origin
(pickup) density and $g$ is the destination (drop-off) density.
To decompose the 2-D problem:

\begin{enumerate}
  \item {East--west sub-problem.}
    Collapse the entire region onto the $x$-axis by summing (integrating)
    $\rho$ over all north--south positions at each $x$-coordinate.
    This gives a 1-D imbalance profile
    \[
      \rho_x(x) \;=\; \int_0^W \rho(x,\,y)\,\dd y
      \qquad \text{(net surplus at each east--west position)}.
    \]

  \item {North--south sub-problem.}
    Similarly, collapse onto the $y$-axis:
    \[
      \rho_y(y) \;=\; \int_0^L \rho(x,\,y)\,\dd x
      \qquad \text{(net surplus at each north--south position)}.
    \]
\end{enumerate}

\bigskip
\noindent
As a scaling approximation, we project the 2-D problem onto its two
coordinate axes.
The sum of the resulting 1-D transport costs is a lower bound on the
full 2-D Manhattan cost; the gap is small when $\rho$ is approximately
additively separable across axes and can be large when $\rho$ has
strong diagonal structure (see Remark 2 below).
We write:
\begin{subequations} \label{eq:decouple}
\begin{align}
  W_1^{\rm M}(f,\,g)
  \;\approx\;
    \min_{v_x}\;\int_0^L \bigl|v_x(x)\bigr|\,\dd x
  \;+\;
    \min_{v_y}\;\int_0^W \bigl|v_y(y)\bigr|\,\dd y
\end{align}
subject to:
\begin{align}
    \frac{\dd v_x}{\dd x} = \rho_x(x),\\
    \frac{\dd v_y}{\dd y} = \rho_y(y).
\end{align}
\end{subequations}

\bigskip
\noindent
Both sub-problems have the same structure:

\begin{itemize}
  \item $v_x(x)$ is the \emph{net rate of empty vehicles passing through}
        position $x$ in the east--west direction
        (positive $=$ moving east; negative $=$ moving west).
  \item The constraint $\frac{\dd v_x}{\dd x} = \rho_x(x)$ is a {flow conservation
        rule}: the rate at which flow ``builds up'' at position $x$ must
        equal the local surplus $\rho_x(x)$.
        Vehicles piling up in a surplus location must be pushed out; deficit
        locations pull vehicles in.
  \item $\int_0^L |v_x(x)|\,\dd x$ accumulates the {absolute flow} at
        every position---this is the total east--west vehicle distance to
        be minimized.
  \item The north--south term works identically but along $y$.
\end{itemize}


\subsubsection{One-Dimensional Subproblem} \label{sec:scaling_manhattan}

Along the $x$-axis of length $L$, with marginal imbalance profile
$\rho_x(s) = \int_0^W \rho(s,y)\,\dd y$
(the net surplus per unit length at position $s\in[0,L]$),
the optimal flow satisfies:
\[
  \frac{\dd v_x}{\dd s} = \rho_x(s),\quad v_x(0)=v_x(L)=0
  \implies
  v_x^*(s) = \int_0^s \rho_x(u)\,\dd u.
\]
The east--west contribution to the per-vehicle rebalancing distance is:
\[
  \text{Cost}_x = \int_0^L |v_x^*(s)|\dd s = \int_0^L \left| \int_0^s \rho_x(u)\dd u \right|\dd s.
\]
Note that $v_x^*$ already carries the full-width flow (since $\rho_x$
integrates $\rho$ over $y$); no additional factor of $W$ is needed.
This $\mathrm{Cost}_x$ is the $x$-component of $W_1^{\rm M}$ in
Eq.~\eqref{eq:decouple}.

\begin{proposition}[Upper bound on the 1-D rebalancing cost]
\label{prop:upper_bound}
For any continuous densities $f,g$ on $\Omega=[0,L]\times[0,W]$ with
imbalance index $I = \tfrac{1}{2}\int_\Omega|f-g|$, the east--west
component of $W_1^{\rm M}$ satisfies
\begin{equation}\label{eq:costx_upper}
  \mathrm{Cost}_x \;\leq\; I\cdot L.
\end{equation}
\end{proposition}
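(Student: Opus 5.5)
The bound follows from a pointwise estimate on the optimal one-dimensional flux $v_x^*(s)=\int_0^s \rho_x(u)\,\mathrm{d}u$, integrated over $[0,L]$. The key steps, in order, are:
\begin{enumerate}
\item Show $|v_x^*(s)|\le I$ for every $s\in[0,L]$.
\item Integrate this bound over $s$, which gives $\mathrm{Cost}_x\le I\cdot L$ immediately.
\end{enumerate}

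For the first step, write $v_x^*(s)=\int_{[0,s]\times[0,W]}\rho\,\mathrm{d}\mathbf{x}$ by Fubini, since $\rho_x$ is the $y$-integral of the continuous function $\rho=f-g$. Split $\rho=\rho^+-\rho^-$ into positive and negative parts. Because $\int_\Omega f=\int_\Omega g=1$, we have $\int_\Omega\rho=0$. Hence $\int_\Omega\rho^+=\int_\Omega\rho^-=\tfrac12\int_\Omega|\rho|=I$.

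Then for the strip $A_s=[0,s]\times[0,W]$,
\[
v_x^*(s)=\int_{A_s}\rho^+-\int_{A_s}\rho^-\le \int_{A_s}\rho^+\le \int_\Omega\rho^+=I,
\]
and symmetrically $v_x^*(s)\ge -\int_{A_s}\rho^-\ge -I$. Thus $|v_x^*(s)|\le I$ for all $s$. Equivalently, $v_x^*(s)$ equals the difference $F_x(s)-G_x(s)$ of the marginal CDFs, and this difference is at most the total variation distance of the marginals. That distance is in turn at most $I$, because projecting onto the $x$-axis cannot increase total variation.

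With the pointwise bound in hand, the second step is just
\[
\mathrm{Cost}_x=\int_0^L|v_x^*(s)|\,\mathrm{d}s\le\int_0^L I\,\mathrm{d}s=I\cdot L.
\]

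I would also note, as a remark, that the boundary conditions are consistent. The condition $v_x^*(L)=0$ holds automatically because $\int\rho=0$, so the explicit formula for $v_x^*$ given before the statement is indeed the optimal, and in fact unique, feasible flow. This is the only point where the no-flux boundary condition is used.

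There is no real obstacle. The only care needed is the positive/negative-part bookkeeping that makes $\int_\Omega\rho^\pm=I$ precise, which relies on the normalization of $f$ and $g$. The bound is tight in the limit where all surplus sits at $x=0$ and all deficit at $x=L$.
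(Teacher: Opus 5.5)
Your proposal is correct and follows essentially the same route as the paper: you prove the pointwise bound $|v_x^*(s)|\le I$ from $\int_\Omega[\rho]^+=\int_\Omega[\rho]^-=I$ and then integrate over $[0,L]$. The only difference is cosmetic: you split the 2-D density $\rho$ directly on the strip $[0,s]\times[0,W]$, skipping the paper's preliminary $2I$ Jensen bound and its split of $\rho_x$, which makes the tightening step slightly more explicit.
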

\begin{proof}
Because $v_x^*(s)=\int_0^s\rho_x(u)\,\dd u$ and $\rho_x =
\int_0^W\rho\,\dd y$, Jensen's inequality gives
$|v_x^*(s)| = |\int_0^s\rho_x\,\dd u|
\leq \int_0^s|\rho_x(u)|\,\dd u
\leq \int_0^L|\rho_x(u)|\,\dd u$.
The last integral equals $\int_0^L|\int_0^W\rho\,\dd y|\,\dd x
\leq \int_\Omega|\rho|\,\dd\mathbf{x} = 2I$, so
$|v_x^*(s)|\leq 2I$ for every~$s$.
A tighter bound follows from the constraint that $v_x^*$ starts and ends
at zero: the positive and negative parts of $\rho_x$ each integrate to at
most $\int_{[\rho_x>0]}\rho_x\,\dd x \leq \int_\Omega[\rho]^+\,\dd\mathbf{x} = I$,
hence $|v_x^*(s)|\leq I$ for all~$s$.
Therefore
$\mathrm{Cost}_x = \int_0^L|v_x^*|\,\dd s \leq I\cdot L$.
\end{proof}

The upper bound is complemented by a structural observation.
Because the flow equation $\dd v_x/\dd s = \rho_x$ is linear,
scaling the imbalance density $\rho\mapsto\alpha\rho$ (which maps
$I\mapsto\alpha I$) scales $v_x^*\mapsto\alpha v_x^*$ and hence
$\mathrm{Cost}_x\mapsto\alpha\,\mathrm{Cost}_x$.
That is, $\mathrm{Cost}_x$ is \emph{1-homogeneous} in~$I$.
Dimensional analysis then fixes the remaining dependence: $\mathrm{Cost}_x$
has the unit of length, and the only length scale in the 1-D sub-problem
is~$L$, so
\begin{equation}\label{eq:costx}
  \mathrm{Cost}_x \;=\; C(\text{shape})\;\cdot I\cdot L,
  \tag{\ref{eq:costx_upper}$'$}
\end{equation}
where $C(\text{shape})\in(0,1]$ is a dimensionless constant that depends
on the shape of $\rho_x$ but not on its amplitude or on $L$.
The upper bound (Proposition~\ref{prop:upper_bound}) confirms $C\leq 1$.

\paragraph{Scaling argument for a representative profile.}
To fix the order-of-magnitude constant, we evaluate $\mathrm{Cost}_x$
under three simplifying assumptions stated together:
\begin{enumerate}[label=\textbf{A\arabic*},leftmargin=2em]
  \item \emph{Uniform spreading.}
        $\rho(x,y)$ has a roughly constant sign across $y$ at each~$x$,
        so $|\int_0^W\!\rho\,\dd y|\approx\int_0^W\!|\rho|\,\dd y$.
  \item \emph{Symmetric profile.}
        The sign change of $\rho_x$ occurs near the midpoint $x\approx L/2$.
  \item \emph{Triangular shape.}
        The flow profile $|v_x^*(x)|$ is approximated by a triangle of
        base~$L$.
\end{enumerate}
Under A1, integrating the average 2-D imbalance density $2I/R$ over
width~$W$ gives $|\rho_x|\sim 2I\,W/R = 2I/L$.
The running integral $v_x^*=\int_0^x\rho_x\,\dd s$ then grows from
zero to a peak magnitude of order $(I/L)\cdot(L/2)=I/2$ at the midpoint
(A2), and returns to zero at $x=L$ by global balance.
Approximating $|v_x^*|$ as a triangle of base $L$ and height $I/2$
(A3) yields
\begin{equation}\label{eq:costx_derived}
  \mathrm{Cost}_x
  \;\sim\; \tfrac{1}{2}\cdot L\cdot\tfrac{I}{2}
  \;=\; \tfrac{IL}{4}
  \;\sim\; I\cdot L,
\end{equation}
confirming the scaling law in Eq.~\eqref{eq:costx}.
The numerical prefactor $\tfrac{1}{4}$ and the order-one errors
introduced by A1--A3 are absorbed into the calibration constant~$C_{\rm M}$.

\begin{corollary}[Upper bound on the north--south rebalancing cost]
\label{cor:costy_upper}
By symmetry (replace $L\leftrightarrow W$ throughout
Proposition~\ref{prop:upper_bound}),
\begin{equation}\label{eq:costy_upper}
  \mathrm{Cost}_y \;\leq\; I\cdot W.
\end{equation}
\end{corollary}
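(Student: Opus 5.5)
The plan is to repeat the argument of Proposition~\ref{prop:upper_bound} with the two coordinates interchanged. Along the $y$-axis of length $W$, the marginal profile is $\rho_y(t)=\int_0^L\rho(x,t)\dd x$. The one-dimensional flow problem $\dd v_y/\dd t=\rho_y(t)$ with $v_y(0)=v_y(W)=0$ has the unique solution
\[
v_y^*(t)=\int_0^t\rho_y(u)\dd u .
\]
The endpoint condition $v_y^*(W)=0$ holds because $\int_0^W\rho_y=\int_\Omega\rho=0$. The cost is then $\mathrm{Cost}_y=\int_0^W|v_y^*(t)|\dd t$.

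The key step is a uniform pointwise bound $|v_y^*(t)|\le I$. I would not rely on the informal ``positive and negative parts'' phrasing, but argue directly as follows.
\begin{itemize}
  \item For each $t$, write $v_y^*(t)=\int_{[0,L]\times[0,t]}\rho\dd\mathbf{x}$.
  \item Using $\rho=\rho^+-\rho^-$, we get $v_y^*(t)\le\int_\Omega\rho^+\dd\mathbf{x}$ and $v_y^*(t)\ge-\int_\Omega\rho^-\dd\mathbf{x}$.
  \item Since $\int_\Omega\rho=0$, we have $\int_\Omega\rho^+=\int_\Omega\rho^-=\tfrac12\int_\Omega|\rho|=I$.
  \item Hence $|v_y^*(t)|\le I$ for all $t\in[0,W]$.
\end{itemize}
Integrating this bound over $[0,W]$ gives $\mathrm{Cost}_y\le I\cdot W$, which is the claim.

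No step is genuinely hard. The only point needing care is the justification of $|v^*|\le I$ rather than the cruder $2I$. It rests on the zero-mean property of $\rho$, which follows from $\int f=\int g=1$; continuity of $f,g$ guarantees all integrals exist. The symmetry $L\leftrightarrow W$ is exact, because neither the proposition nor this argument uses $L\ge W$. So the corollary follows verbatim by relabeling axes.
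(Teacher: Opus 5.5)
Your proposal is correct and follows the paper's route: swap the roles of $L$ and $W$ in Proposition~\ref{prop:upper_bound}, bound the running flux pointwise by $|v_y^*(t)|\le I$ using the split of $\rho$ into surplus and deficit parts of mass $I$ each, and integrate over $[0,W]$. Your direct estimate $-\int_\Omega\rho^-\le\int_{[0,L]\times[0,t]}\rho\le\int_\Omega\rho^+$ is a cleaner version of the paper's positive/negative-part step, which the paper routes through the positive part of the one-dimensional marginal rather than through $\rho^\pm$ on the sub-rectangle.
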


\paragraph{Scaling argument for $\mathrm{Cost}_y$.}
The north--south sub-problem is structurally identical to the east--west
one: replace $L\leftrightarrow W$ throughout the scaling argument above.
The same 1-homogeneity and dimensional analysis give
$\mathrm{Cost}_y = C(\text{shape})\cdot I\cdot W$, and Assumptions A1--A3
yield
\begin{equation}\label{eq:costy}
  \mathrm{Cost}_y \;\sim\; \tfrac{IW}{4} \;\sim\; I\cdot W,
\end{equation}
confirming that the north--south rebalancing cost grows with region
width~$W$ for the same reasons that $\mathrm{Cost}_x$ grows with~$L$. 
The same caveats apply as for $\mathrm{Cost}_x$: this is a heuristic
scaling estimate under Assumptions A1--A3, not a rigorous bound on
$W_1^{\rm M}$ for general $f,g$.

\subsubsection{Final Manhattan Model} \label{sec:FinalManhattanModel}

Inserting the scaling results~\eqref{eq:costx_derived} and~\eqref{eq:costy}
into the decomposition~\eqref{eq:decouple}:
\begin{equation}
  \W^{\rm M} \;\approx\; C_{\rm M} \cdot I \cdot (L + W),
  \label{eq:W1M_raw}
\end{equation}
where $C_{\rm M}>0$ absorbs the proportionality constant from the
scaling argument.

Substituting $L = \sqrt{\kappa R}$ and $W = \sqrt{R/\kappa}$:
\begin{equation}
  L + W = \sqrt{\kappa R} + \sqrt{\frac{R}{\kappa}}
        = \sqrt{R}\left(\sqrt{\kappa} + \frac{1}{\sqrt{\kappa}}\right)
        = \sqrt{R}\cdot\frac{\kappa+1}{\sqrt{\kappa}}
        = \sqrt{R}\cdot\gM.
  \label{eq:LplusW}
\end{equation}
\begin{remark}[Opposing approximation errors and robustness of the functional form]
The derivation involves two steps with opposing biases.
First, the axis decomposition (Eq.~\ref{eq:decouple}) yields a lower bound on
$W_1^{\rm M}$ by discarding diagonal interactions; the gap can be substantial
when $\rho$ has strong diagonal structure.
Second, Assumption~A1 replaces $|\int_0^W\rho\,\dd y|$ with
$\int_0^W|\rho|\,\dd y$, inflating the estimated marginal imbalance.
The functional form $W_1 \propto I\cdot(L+W)$ is nevertheless robust: it is
supported independently by Proposition~\ref{prop:upper_bound}, which
guarantees $\mathrm{Cost}_x\leq I\cdot L$ (and symmetrically
$\mathrm{Cost}_y\leq I\cdot W$), so the true $W_1$
is sandwiched between the decomposition lower bound and $I\cdot(L+W)$ regardless
of demand structure.
The calibration constant $C$ absorbs the net effect of these offsetting errors
and varies with the demand pattern.
\label{remark_errors}
\end{remark} 

\begin{remark}[Directional demand and the scope of $g_{\rm M}$]
\label{rem:gM_scope}
The shape factor $g_{\rm M}(\kappa)=(L+W)/\sqrt{R}$ arises from combining
$\mathrm{Cost}_x \propto I\cdot L$ and $\mathrm{Cost}_y \propto I\cdot W$
with equal proportionality constants, implicitly assuming that the demand
imbalance generates comparable east--west and north--south rebalancing flows.
For strongly directional demand---e.g., origins concentrated in the western
half of $\Omega$ and destinations in the eastern half---the north--south
rebalancing cost vanishes ($\mathrm{Cost}_y=0$ exactly) and the exact
per-vehicle cost is $W_1^{\rm M}=IL/2$, independent of~$W$.
This motivates an anisotropic refinement via axis-separate approximations as given in Appendix~\ref{appx_anisotropic}. For the spirit of simplicity and parsimony, we focus on the isotropic form in the main text, but practitioners facing strongly directional demand should consider the anisotropic refinement and calibrate axis-dependent parameters separately from historical data.
\end{remark}


Substituting \eqref{eq:LplusW} into \eqref{eq:W1M_raw} gives: 

\begin{equation}
  \W^{\rm M} \;\approx\; C_{\rm M}\cdot I\cdot\sqrt{R}\cdot\gM,
  \qquad
  g_{\rm M}(\kappa) = \frac{\kappa+1}{\sqrt{\kappa}}.
  \label{eq:W1M}
\end{equation}
Multiplying
by $N = \lambda h R$ yields the Manhattan (L\texorpdfstring{$^1$}{1}) model:
\begin{equation}
  \mathbb{E}[VKT^{\rm M}]
  \;=\; N\cdot\W^{\rm M}
  \;\approx\; C_{\rm M}\cdot I\cdot(\lambda h)\cdot R^{3/2}\cdot\gM.
  \label{eq:VKTM}
\end{equation}
\subsection{Derivation: Euclidean (L\texorpdfstring{$^2$}{2}) Metric}
\label{sec:euclidean}

The Euclidean (or $L^2$) metric is
$\|\mathbf{x}-\mathbf{y}\|_2 = \sqrt{(x_1-y_1)^2+(x_2-y_2)^2}$,
appropriate when vehicles can travel in any direction (e.g.\ floating
zones, drone delivery, or as a theoretical lower bound). 
Beckmann's theorem applies equally to the Euclidean metric with the
$L^1$ objective replaced by $L^2$:
\[
  \W^{\rm E}(f,g)
  = \min_{\mathbf{v}}
    \int_\Omega \|\mathbf{v}(\mathbf{x})\|_2\,\dd\mathbf{x}
  \quad\text{subject to}\quad
  \nabla\cdot\mathbf{v} = \rho\;\text{ in }\Omega,\quad
  \mathbf{v}\cdot\hat{n}\big|_{\partial\Omega}=0.
  \label{eq:W1E_Beckmann}
\]



\smallskip\noindent
Although Euclidean vehicles can travel diagonally, the total
rebalancing cost is dominated by the two axis-aligned components of
the imbalance --- one along $L$ and one along $W$.
A Fourier decomposition of the imbalance field $\rho$ on the rectangle
$\Omega$ (Appendix~\ref{appx_proof}) shows that the
east--west component contributes a cost proportional to $I\cdot L$
and the north--south component contributes $I\cdot W$.

\smallskip\noindent
Summing both axis contributions gives
$\W^{\rm E} \approx C_{\rm E}\cdot I\cdot(L+W)$.
Since $L+W = \sqrt{R}\,g_{\rm M}(\kappa)$ (Eq.~\ref{eq:LplusW}),
the Euclidean model inherits the \emph{same} shape factor $g_{\rm M}$
as the Manhattan model:
\begin{equation}
  \W^{\rm E} \;\approx\; C_{\rm E}\cdot I\cdot\sqrt{R}\cdot g_{\rm M}(\kappa).
  \label{eq:W1E}
\end{equation}

\smallskip\noindent
where the only difference from the Manhattan model is the calibration constant.

Multiplying by the number of trips $N = \lambda h R$ gives the \emph{Euclidean (L$^2$) model}

\begin{equation}
  \mathbb{E}[VKT^{\rm E}]
  \;=\; N\cdot\W^{\rm E}
  \;\approx\; C_{\rm E}\cdot I\cdot(\lambda h)\cdot R^{3/2} \cdot \gM.
  \label{eq:VKTE}
\end{equation}

\subsection{Unified Model and Upper Bound on 2-D Rebalancing Cost}
\label{sec:summary}

Both models predict total rebalancing VKT as a product of imbalance
index $I$, trip volume $\lambda h$, and a region-size factor
$R^{3/2}$, modulated by a metric-specific shape factor:
\begin{equation}
  \mathbb{E}[VKT]
  \;\approx\; C\cdot I\cdot(\lambda h)\cdot R^{3/2}\cdot g_{\rm M}(\kappa),
  \label{eq:unified}
\end{equation}
which implies that (i) doubling the service area doubles the number of vehicles to relocate
($\propto R$) \emph{and} increases the average relocation distance
($\propto\sqrt{R}$), giving $VKT\propto R\cdot\sqrt{R} = R^{3/2}$; and (ii) both metrics share the same shape factor
$g_{\rm M}(\kappa)=(\kappa+1)/\sqrt{\kappa}$, which is minimized at
$\kappa=1$, so a square region minimizes rebalancing cost regardless of
metric.




  
   Our model reveals a contrast to the shape-independent finding in \cite{daganzo2004bounds}: once structural imbalance is incorporated, the cost acquires an explicit shape factor $g_{\rm M}(\kappa)$, because imbalance flows span the entire service region rather than merely local neighborhoods, making the geometry of the region consequential.

It is worth noting that Eq.~\eqref{eq:unified} is an approximation with the constant $C$ to be calibrated for certain demand patterns, rather than a rigorous result on $\mathbb{E}[VKT]$ and $W_1$ for general $f,g$.
The following proposition establishes a rigorous upper bound for $W_1$.

\begin{proposition}\label{prop:diameter_bound}
For any demand distributions $f,g$ on a rectangle $\Omega=[0,L]\times[0,W]$ with area $R=LW$, aspect ratio $\kappa=L/W$, and imbalance index $I=\tfrac{1}{2}\int_\Omega|f-g|$, the Wasserstein-1 distance satisfies
\begin{equation}\label{eq:diameter_bound}
    W_1(f,g) \;\leq\; I \cdot \sqrt{R} \cdot g_{\rm M}(\kappa)
\end{equation}
under both the Manhattan and Euclidean metrics.
\end{proposition}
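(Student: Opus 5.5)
We prove the bound with an explicit, generally suboptimal transport plan whose cost is easy to bound. Note first that $I\sqrt{R}\,g_{\rm M}(\kappa)=I(L+W)$ by Eq.~\eqref{eq:LplusW}. Since $\|\mathbf{z}\|_2\le\|\mathbf{z}\|_1$ for every $\mathbf{z}\in\mathbb{R}^2$, we have $W_1^{\rm E}(f,g)\le W_1^{\rm M}(f,g)$ for any $f,g$. It therefore suffices to prove $W_1^{\rm M}(f,g)\le I(L+W)$.

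We avoid the axis-projection decomposition~\eqref{eq:decouple}, which is only a lower bound. Instead we use the standard ``keep the common mass'' coupling.

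Write $m=\min(f,g)$, $\rho^+=[f-g]^+$ and $\rho^-=[g-f]^+$. Then $f=m+\rho^+$, $g=m+\rho^-$, and
\[
\int_\Omega\rho^+=\int_\Omega\rho^-=\tfrac12\int_\Omega|f-g|=I.
\]
If $I=0$ then $f=g$ and both sides vanish. Otherwise define the coupling
\[
\gamma=(\mathrm{id},\mathrm{id})_\#m+\tfrac1I\,\rho^+\otimes\rho^- .
\]
Its first marginal is $m+\rho^+=f$ and its second is $m+\rho^-=g$, so $\gamma\in\Pi(f,g)$. The diagonal part of $\gamma$ contributes zero cost. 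The product part contributes at most $\tfrac1I\cdot I\cdot I\cdot\sup_{\mathbf{x},\mathbf{y}\in\Omega}\|\mathbf{x}-\mathbf{y}\|_1=I(L+W)$, because the Manhattan diameter of the rectangle is $L+W$. By the definition~\eqref{eq:MK_M} of $W_1^{\rm M}$ as an infimum over couplings,
\[
W_1^{\rm M}(f,g)\le I(L+W)=I\sqrt{R}\,g_{\rm M}(\kappa),
\]
and the Euclidean case follows from the comparison in the first paragraph.

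An equivalent route uses Kantorovich--Rubinstein duality. For any $1$-Lipschitz $\varphi$ (with respect to $\|\cdot\|_1$) we may shift by a constant so that $\sup\varphi+\inf\varphi=0$, giving $\|\varphi\|_\infty\le(L+W)/2$. Then $\int\varphi(f-g)\le\|\varphi\|_\infty\int|f-g|\le I(L+W)$. Either argument works. The coupling version is more elementary and does not need duality on $\Omega$.

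The only delicate point is not to argue through the sum $\mathrm{Cost}_x+\mathrm{Cost}_y$. Proposition~\ref{prop:upper_bound} and Corollary~\ref{cor:costy_upper} together bound only the axis-projection quantity, which is a lower bound on $W_1^{\rm M}$, so they cannot yield an upper bound on $W_1^{\rm M}$. The diameter argument bypasses this issue. The remaining steps, checking the marginals of $\gamma$ and the measurability of the diagonal pushforward, are routine for continuous densities.
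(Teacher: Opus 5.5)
Your proof is correct and takes essentially the paper's route: bound the cost of moving the surplus mass $I$ by the Manhattan diameter $L+W=\sqrt{R}\,g_{\rm M}(\kappa)$, with your explicit coupling $(\mathrm{id},\mathrm{id})_{\#}m+\tfrac{1}{I}\rho^+\otimes\rho^-$ spelling out a step the paper leaves implicit, namely that the common mass $\min(f,g)$ stays in place so $W_1(f,g)$ is at most the cost of a plan from $[\rho]^+$ to $[\rho]^-$. The only difference is the Euclidean case: the paper compares diameters ($\sqrt{L^2+W^2}\le L+W$), while you use $\|\cdot\|_2\le\|\cdot\|_1$ to get $W_1^{\rm E}\le W_1^{\rm M}$, and both steps are immediate.
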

\begin{proof}
Let $[\rho]^{+}=\max(\rho,0)$ and $[\rho]^{-}=\max(-\rho,0)$ denote the surplus and deficit parts of $\rho=f-g$, so that $\int_\Omega[\rho]^{+}=\int_\Omega[\rho]^{-}=I$.
For any transport plan $\gamma$ of $[\rho]^{+}$ and $[\rho]^{-}$,
\[
    \int_{\Omega\times\Omega} d(x,y)\,\dd\gamma(x,y)
    \;\leq\; \mathrm{diam}(\Omega)\cdot\int_{\Omega\times\Omega}\dd\gamma(x,y)
    \;=\; \mathrm{diam}(\Omega)\cdot I.
\]
Taking the infimum over all transport plans gives $W_1(f,g)\leq\mathrm{diam}(\Omega)\cdot I$. 
Under the Manhattan metric, the diameter is achieved at antipodal corners:
$\mathrm{diam}^{\rm M}(\Omega)=L+W=\sqrt{\kappa R}+\sqrt{R/\kappa}=\sqrt{R}\,g_{\rm M}(\kappa)$.
Under the Euclidean metric, $\mathrm{diam}^{\rm E}(\Omega)=\sqrt{L^2+W^2}\leq L+W=\sqrt{R}\,g_{\rm M}(\kappa)$,
since $(L+W)^2=L^2+2LW+W^2\geq L^2+W^2$.
In both cases, $W_1(f,g)\leq I\cdot\sqrt{R}\cdot g_{\rm M}(\kappa)$.
\end{proof}

Eq.~\eqref{eq:unified} thus lies within $I\cdot(\lambda h)\cdot R^{3/2}\cdot g_{\rm M}(\kappa)$ by a factor $C \in (0,1]$.

\section{Numerical Study}
\label{sec:numerical}

\subsection{Study Design} \label{sec:study_design}

We generate random rebalancing instances by independently sampling:
(i) area $R\in\{0.25,\,0.5,\,1,\,2,\,4,\,8,\,16,\,32\}$\,km$^2$;
(ii) aspect ratio $\kappa\in\{1,\,1.5,\,2,\,3,\,4,\,6,\,9\}$.
From these, $L=\sqrt{\kappa R}$ and $W=\sqrt{R/\kappa}$ are computed.
For each instance, $M$ trip origins (vehicle pickup locations) and $M$
trip destinations (vehicle drop-off locations) are drawn independently and
uniformly over $[0,L]\times[0,W]$, where $M$ is drawn from a
$\text{Poisson}(22)$ distribution truncated to $[6,\,32]$ to ensure
LP tractability.
The exact Wasserstein-1 distance is computed by solving the surplus--deficit transportation LP on the same $8\times 8$ grid ($N_z=64$ cells) used to compute $I$ \eqref{eq:Idiscrete}. This same partitioning is used across scenarios to ensure computational efficiency and consistent LP solutions. In practical rebalancing tasks, the service zones (or stations) are given as input; we do not attempt here to determine the optimal partitioning (or to locate shared-bike stations and shared-car parking lots).

Using the trip-origin fractions $\{p_k\}$ and trip-destination fractions
$\{q_k\}$ defined in \eqref{eq:Idiscrete}, the net surplus and deficit at
each cell are
\begin{equation}
  s_k = q_k - p_k > 0,\; k\in\mathcal{S}=\{k:q_k>p_k\}, \,
  d_k = p_k - q_k > 0,\; k\in\mathcal{D}=\{k:p_k>q_k\},
  \label{eq:surplus_deficit}
\end{equation}
where $\mathcal{S}$ (surplus cells, drop-offs exceed pickups) and
$\mathcal{D}$ (deficit cells, pickups exceed drop-offs) satisfy
$\sum_{k\in\mathcal{S}}s_k = \sum_{k\in\mathcal{D}}d_k = I$
by construction.
Let $c_{kl}=d(\mathbf{x}_k,\mathbf{x}_l)$ be the inter-centroid travel
distance and $\gamma_{kl}\geq 0$ the fraction of vehicles relocated from
cell~$k$ to cell~$l$.
The LP minimizes total repositioning distance:
\begin{align}
  \min_{\gamma_{kl}\geq 0} \quad
    & \sum_{k\in\mathcal{S}}\sum_{l\in\mathcal{D}} c_{kl}\,\gamma_{kl}
    \notag\\
  \text{s.t.}\quad
    & \sum_{l\in\mathcal{D}} \gamma_{kl} = s_k,\quad k\in\mathcal{S}
      \quad\text{(surplus absorbed)},
      \notag\\
    & \sum_{k\in\mathcal{S}} \gamma_{kl} = d_l,\quad l\in\mathcal{D}
      \quad\text{(deficit satisfied)}.
  \label{eq:transport_LP}
\end{align}
The optimal value $W_{1,\mathrm{exact}}
= \sum_{k,l} c_{kl}\,\gamma_{kl}^{*}$
approximates the Wasserstein-1 distance $W_1(f,g)$
\citep{villani2003topics}, converging to it exactly as $N_z\to\infty$.
The LP is solved by the HiGHS solver via SciPy.

For calibration, we define the per-instance calibration ratio:
\[
  C_{\mathrm{inst}} = \frac{W_{1,\mathrm{exact}}}{I\cdot\sqrt{R}\cdot g_{\rm M}(\kappa)},
\]
where $g_{\rm M}(\kappa) = (\kappa+1)/\sqrt{\kappa}$ for both metrics.
Three estimators of $C$ are compared:
\begin{enumerate}[leftmargin=2em,itemsep=3pt]
  \item {Robust median.}
    $\hat{C}_{\rm med}=\text{median}\{C_{\mathrm{inst}}\}$ ---
    resistant to outliers and makes no distributional assumption.
  \item {Constrained log-ordinary-least-squares (log-OLS).}
    $\hat{C}_{\rm log}=\exp\!\left(\text{mean}\{\log C_{\mathrm{inst}}\}\right)$
    --- the geometric-mean estimator, equivalent to OLS in log-space with
    unit exponents fixed.
  \item {Free log-OLS.}
    Jointly estimates $C$ and exponents $(\alpha,\beta,\gamma)$ from:
    \[
      \log W_1 = \log C + \alpha\log I
               + \beta\log\sqrt{R} + \gamma\log g_{\rm M}(\kappa).
    \]
    Theory predicts $\alpha=\beta=\gamma=1$.
\end{enumerate}

\subsection{Calibration Results}

Table~\ref{tab:calibration} presents the calibration results for both metrics.
It is observed that the two medians ($\hat{C}_{\rm M}=0.1440$, $\hat{C}_{\rm E}=0.1189$) satisfy
$\hat{C}_{\rm E}/\hat{C}_{\rm M} = 0.825$, within 6\% of the theoretical prediction
$\pi/4 \approx 0.785$ from the isotropic-direction heuristic in Appendix~\ref{appx_proof}.
The small gap reflects finite-sample bias from the discrete LP solver and
the isotropy assumption underlying the Fourier mode analysis.
\begin{table}[htbp]
\centering\small
\caption{Calibration of the constant $C$ ($n_{\rm sim}=500$ per metric).}
\label{tab:calibration}
\renewcommand{\arraystretch}{1.35}
\begin{tabular}{lcc}
\toprule
Estimator & Manhattan ($g_{\rm M}$) & Euclidean ($g_{\rm M}$) \\
\midrule
Robust median $\hat{C}_{\rm med}$     & 0.1440 & 0.1189 \\
95\% confidence interval              & [0.1408,\,0.1473] & [0.1156,\,0.1221] \\
Constrained log-OLS $\hat{C}_{\rm log}$ & 0.1459 & 0.1203 \\
Free log-OLS $\hat{C}_{\rm ols}$       & 0.1818 & 0.1380 \\
Standard error (median)              & 0.00166 & 0.00165 \\
$R^2_{\rm log}$ (free OLS)           & 0.922   & 0.897 \\
\bottomrule
\end{tabular}
\end{table}



\subsection{Prediction Accuracy}

Figure~\ref{fig:parity} shows the predicted $W_1$ versus the LP-exact $W_1$ for all
500 instances per metric, where predictions use the calibrated robust median
$\hat{C}_{\rm M}=0.1440$ for the Manhattan model and
$\hat{C}_{\rm E}=0.1189$ for the Euclidean model.
The model tracks the 1:1 reference closely across more than three orders
of magnitude in $W_1$.
Table~\ref{tab:errors} reports the full prediction error suite.
The Euclidean model exhibits somewhat higher mean absolute percentage error (MAPE) (25.2\%) compared to
Manhattan (18.9\%), reflecting the additional approximation involved in
the Fourier mode analysis (Appendix~\ref{appx_proof}) relative to the
axis-decoupling approximation used for Manhattan.
Both models are unbiased (mean bias error, MBE, near zero).

\begin{figure}[htbp]
  \centering
  \includegraphics[width=\linewidth]{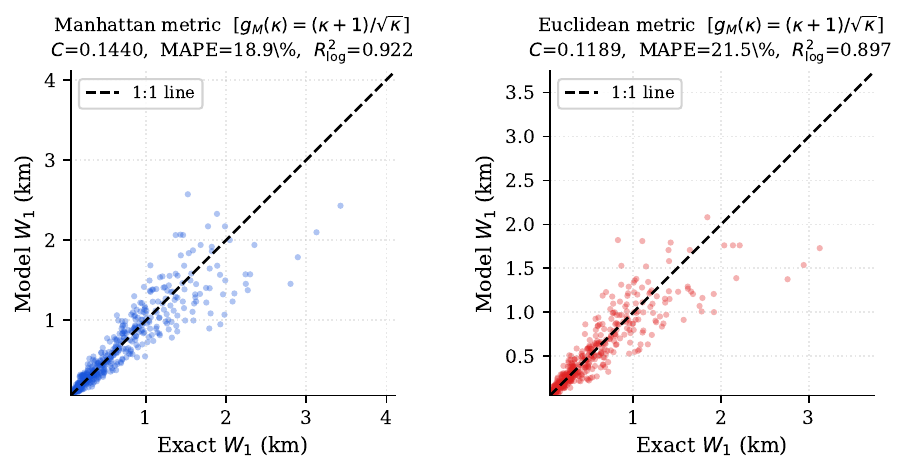}
  \caption{Parity plots of model-predicted versus LP-exact per-vehicle
    distance $W_1$ for the Manhattan (left) and Euclidean (right) metrics.
    Each point represents one randomly generated rebalancing instance ($n_{\rm sim}=500$).
    Dashed line: 1:1 reference.}
  \label{fig:parity}
\end{figure}

\begin{table}[htbp]
\centering\small
\caption{Prediction error metrics on 500 instances per metric
(calibrated $C$ from the robust median estimator).}
\label{tab:errors}
\renewcommand{\arraystretch}{1.35}
\begin{tabular}{lccc}
\toprule
Error metric & Manhattan & Euclidean & Notes \\
\midrule
Mean absolute error (km)    & 0.130 & 0.143 & Absolute scale \\
Root mean square error, RMSE (km) & 0.232 & 0.251 & Penalises outliers \\
MAPE (\%)                   & 18.9  & 25.2  & Relative error \\
$R^2$ (linear scale)        & 0.820 & 0.796 & Misleading for power-law \\
$R^2$ (log scale)$^\dagger$ & 0.922 & 0.875 & Appropriate metric \\
Mean bias error, MBE (km)        & 0.024 & 0.029 & Near-zero: unbiased \\
Median absolute error (km)  & 0.059 & 0.067 & Robust measure \\
p95 absolute percentage error, APE (\%) & 49.8  & 64.1  & Tail performance \\
\bottomrule
\multicolumn{4}{p{12cm}}{\footnotesize
$^\dagger$ Log-scale $R^2$ is the appropriate goodness-of-fit metric for
multiplicative power-law models.}
\end{tabular}
\end{table}



\subsection{Verification}

Table~\ref{tab:exponents} compares the free log-OLS fitted exponents with the theoretical prediction of 1.0. The most important result is $\beta\approx 1.02$ for both metrics, validating the $\sqrt{R} \;\W$ and the $R^{3/2}$ total VKT scaling as a robust quantitative law.
The exponents $\alpha<1$ and $\gamma<1$ are discussed below.

\begin{table}[!htbp]
\centering\small
\caption{Fitted log-OLS exponents vs.\ theoretical predictions.}
\label{tab:exponents}
\renewcommand{\arraystretch}{1.35}
\begin{tabular}{lcccc}
\toprule
Parameter & Theory & Manhattan & Euclidean & Interpretation \\
\midrule
$\alpha$ (exponent of $I$)       & 1.0 & 0.799 & 0.671 & Sub-linear (spatial diffusion) \\
$\beta$  (exponent of $\sqrt{R}$) & 1.0 & 1.019 & 1.013 & Confirmed to $<$2\% \\
$\gamma$ (exponent of $g_{\rm M}(\kappa)$) & 1.0 & 0.666 & 0.730 & Sub-linear (boundary effects) \\
\bottomrule
\end{tabular}
\end{table}



The sub-linear exponent on $I$ ($\alpha < 1$) arises because the model
treats $I$ as a pure scaling factor, whereas in practice higher $I$
correlates with more spatially concentrated imbalance that forces
longer individual trips.
The cost therefore grows slower than linearly in $I$, giving
$\alpha_{\rm M}\approx 0.80$ and $\alpha_{\rm E}\approx 0.67$.
The Euclidean metric is more affected because it exploits local
cancellations more efficiently when imbalance is diffuse.

Similarly, the sub-linear exponent on $g_{\rm M}(\kappa)$ ($\gamma < 1$) reflects
the fact that very elongated regions ($\kappa\gg 1$) push transport into
one dimension along the long axis, where Manhattan and Euclidean
distances coincide.
The actual cost then grows slower with $\kappa$ than the 2-D formula
predicts, giving $\gamma_{\rm M}\approx 0.67$ and
$\gamma_{\rm E}\approx 0.73$.
The Manhattan exponent is lower because elongation drives the
rebalancing problem toward purely one-dimensional transport, where
Manhattan and Euclidean distances coincide. 
In other words, Euclidean costs must therefore grow faster with $\kappa$ to close this
gap, yielding a larger fitted exponent $\gamma_{\rm E}>\gamma_{\rm M}$.
\subsection{Robustness across demand distributions}
\label{sec:robustness}

The theoretical derivation places no restriction on the spatial densities
$f$ and $g$ beyond continuity and normalization; the calibration in
Section~\ref{sec:study_design} uses independently and uniformly
distributed origins and destinations as the baseline demand pattern.
We test robustness across three demand distribution families, as shown in Figure \ref{fig:demand_pattern}, using a
train/test split (60 instances per split per condition).
\begin{itemize}[leftmargin=2em,itemsep=2pt]
  \item {Uniform}: origins and destinations i.i.d.\
        uniform on $\Omega$. Serves as the baseline calibration case.
  \item {Spatially clustered}: origins and destinations each drawn
        from a $k$-component Gaussian mixture ($k\approx N/5$).
        Mimics dense activity centers.
  \item {Directional asymmetric}: origins concentrated in the
        western half of $\Omega$, destinations in the eastern half.
        Mimics a dominant commute flow.
\end{itemize}
\begin{figure}[!ht]
    \centering
    \includegraphics[width=1\linewidth]{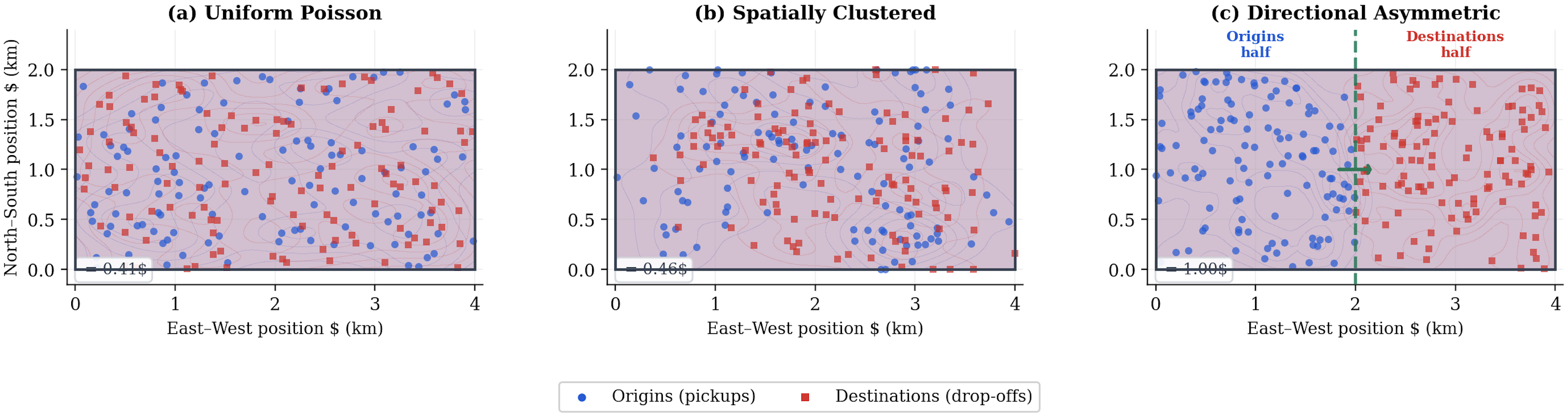}
    \caption{Illustration of three demand distribution families.}
    \label{fig:demand_pattern}
\end{figure}
Table~\ref{tab:robustness} summarises results.
Two findings are notable.
First, the {model structure is robust}: $R^2_{\rm log}>0.85$ for all
demand types and both metrics, confirming that $I$, $\sqrt{R}$, and $g_{\rm M}(\kappa)$
remain the correct explanatory factors regardless of the spatial demand pattern.
Second, {$C$ depends substantially on demand type}: it ranges from
$\approx 0.12$--$0.15$ (uniform) to $\approx 0.38$--$0.40$ (asymmetric),
roughly a three-fold variation.
\paragraph{Interpretation of $C$.}
    $C$ is essentially measuring how concentrated the imbalance is, and how efficiently the transport plan can exploit short-range cancellations. Uniformly distributed demand has many local cancellations (nearby origins and destinations neutralise each other), keeping $C$ small. Asymmetric demand has almost no local cancellations — every vehicle must cross the midline — driving $C$ to its maximum but still within the upper bound per Proposition~\ref{prop:upper_bound}.

\begin{table}[!htbp]
\centering\small
\caption{Robustness: calibration constant $C$ and out-of-sample error by
  demand type (train/test split, 60 instances each; $C$ calibrated within
  each demand family).}
\label{tab:robustness}
\renewcommand{\arraystretch}{1.35}
\begin{tabular}{llccc}
\toprule
Demand type & Metric & $\hat{C}_{\rm med}$ & MAPE (\%) & $R^2_{\rm log}$ \\
\midrule
\multirow{2}{*}{Uniform}
  & Manhattan & 0.1469 & 20.3 & 0.925 \\
  & Euclidean & 0.1180 & 19.5 & 0.918 \\
\midrule
\multirow{2}{*}{Spatially clustered}
  & Manhattan & 0.2129 & 26.3 & 0.855 \\
  & Euclidean & 0.1688 & 26.1 & 0.878 \\
\midrule
\multirow{2}{*}{Directional asymmetric}
  & Manhattan & 0.3967 & 11.9 & 0.978 \\
  & Euclidean & 0.3771 & 19.0 & 0.940 \\
\bottomrule
\end{tabular}
\end{table}


\paragraph{Anisotropic model.} Testing Eq.~(\ref{eq:W1M_anisotropic}) on the directional asymmetric demand under the Manhattan metric yields a better fit ($R^2_{\rm log}=0.99$, MAPE 6.78\%). The calibrated coefficient $\tilde C_M^W=0.2337$ is smaller than $\tilde C_M^L=0.5047$ but remains material.

\section{Case Study: New York City}
\label{sec:case_study}

This section applies the Manhattan Model (Eq.~\ref{eq:W1M}) to a
real-world mobility system: the New York City (NYC) for-hire vehicle
(FHV) network over January 2026. Comparable or better fits are obtained for other months (e.g., September 2025); January is selected because its outlier events provide a more demanding test of the model. 

Two nested scenarios are studied: (i)~the {full
NYC service area} (all five boroughs) and (ii)~{Manhattan Island only}
(a geographically constrained, high-demand sub-zone).
Both scenarios share the same theoretical model but differ in service
area, zone count, demand volume, aspect ratio, and --- crucially ---
the calibrated constant $C_{\rm M}$, enabling a controlled inter-scenario
comparison. 
The objective of the case study is to assess how the Manhattan Model predicts daily $\W$ values on a real
street network computed via exact LP. {The Euclidean Model is not used 
because the LP cost matrix is built from real-world street network distances, making the Manhattan metric the natural comparator.}


\subsection{Data and Study Design}
\label{sec:data}

Trip records are drawn from the NYC Taxi and Limousine Commission (TLC)
public trip data for January 2026 (31 days) \citep{nyc_tlc_2026}.
Each record contains a pickup zone, a drop-off zone, and a timestamp.
Daily origin and destination counts are aggregated to the 263 official
TLC taxi zones, which serve as the spatial units for the LP and model.
Table~\ref{tab:zones} summarises zone counts, trip shares, and key
activity hubs by borough. 
We note that NYC for-hire vehicles are dispatched on demand rather than
rebalanced as a shared fleet; the TLC trip data are used here to demonstrate that the
proposed $\W$ formula generalizes to real-world street networks and
empirical demand distributions.

\begin{table}[!ht]
\centering
\caption{NYC Taxi Zone breakdown by borough used in this study.}
\label{tab:zones}
\begin{tabular}{llll}
\toprule
\textbf{Borough} & \textbf{Zones} & \textbf{Trip share} &
\textbf{Key hubs} \\
\midrule
Manhattan    & 69 & 56\%  &
  Penn Station, Grand Central, Times Sq.\\
Queens       & 69 & 22\%  & JFK Airport, LaGuardia Airport\\
Brooklyn     & 61 & 15\%  & ---\\
Bronx        & 43 &  6\%  & ---\\
Staten Island& 21 &  1\%  & ---\\
\midrule
\textbf{Total} & \textbf{263} & \textbf{100\%} & \\
\bottomrule
\end{tabular}
\end{table}


We consider two scenarios defined by the service area:
\begin{itemize}[leftmargin=2em,itemsep=2pt]
  \item {Scenario (i) --- NYC (all boroughs):} all 263 zones
        are included.
        The bounding box is approximately square ($\kappa=1.056$),
        area $R=1830.2$\,km$^2$, giving $g_{\rm M}=2.001$.
  \item {Scenario (ii) --- Manhattan Island:} only trips with
        both pickup and drop-off within the 69 Manhattan zones are
        retained.
        The island's elongated north--south geometry yields $\kappa=2.393$
        and $g_{\rm M}=2.193$. 
\end{itemize}

Table~\ref{tab:geometry} summarises the bounding-box geometry for
each scenario.

\begin{table}[H]\centering
\caption{Bounding-box geometry for each scenario.}\label{tab:geometry}
\begin{tabular}{lllllll}\toprule
\textbf{Scenario} & \textbf{Zones} & $L$ (km) & $W$ (km) & $R$ (km$^2$) & $\kappa$ & $\gM$\\\midrule
NYC (all boroughs) & 263 & 43.965 & 41.628 & 1830.2 & 1.0562 & 2.0007\\
Manhattan Island   &  69 & 21.741 &  9.086 &  197.5 & 2.3927 & 2.1933\\
\bottomrule\end{tabular}\end{table}

For each day $t$ and scenario, the Wasserstein-1 distance
$\W^{(t)}$ is obtained by solving the transportation
LP~\eqref{eq:transport_LP}, where $d_{ij}$ is the shortest-path
distance (in km) between zone centroids $i$ and $j$ on the
OpenStreetMap (OSM) road network (computed via Dijkstra's algorithm),
and $p_i$, $q_j$ are the normalized pickup and drop-off fractions for
each zone.

The 31 days are divided into a 20-day \emph{calibration} set (the first
20 days of January) and an 11-day \emph{out-of-sample validation} set
(the remaining days).
The calibration set is used exclusively to estimate $\widehat{C}_{\rm M}$;
the validation set tests out-of-sample predictive accuracy.


Figures~\ref{fig:map_nyc} and~\ref{fig:map_man} visualise the spatial
structure of the rebalancing problem on a representative day
(Day~1, Thursday 1 January 2026).
Zone centroids are colored by net imbalance
$\rho_i $ (red: surplus, blue: deficit), and the 20
highest-flow LP-optimal rebalancing corridors are shown as purple
arrows.

\begin{figure}[H]\centering
\includegraphics[width=\linewidth]{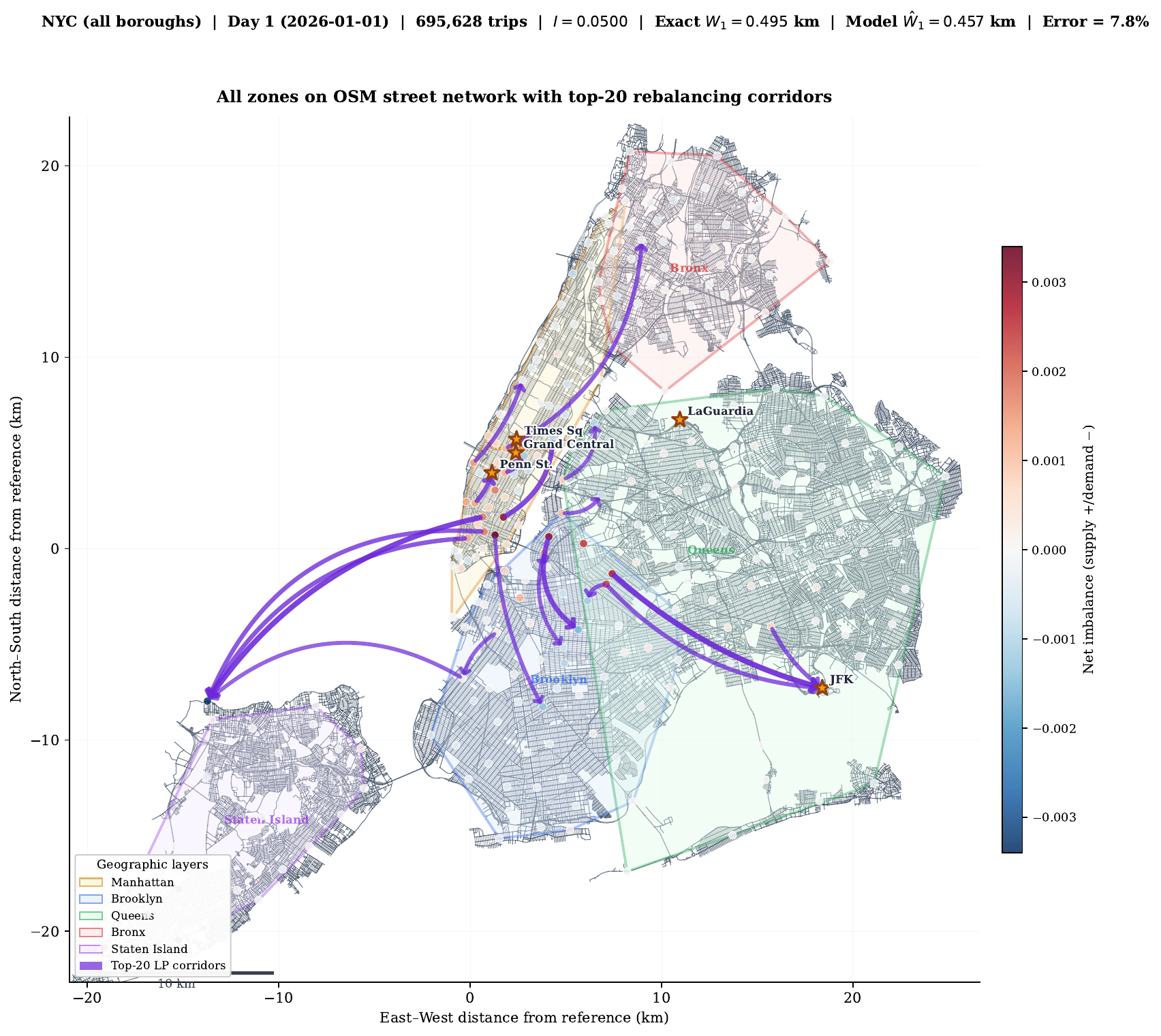}
\caption{\textbf{NYC All Boroughs --- Day-1 spatial map.}
OSM road network (gray edges), zone centroids colored by net imbalance
$\rho_i $ (\textcolor{red}{red}~=~surplus,
\textcolor{trblue}{blue}~=~deficit), purple arrows~=~top-20 LP-optimal rebalancing
corridors scaled by flow volume.}
\label{fig:map_nyc}
\end{figure}


\begin{figure}[H]\centering
\includegraphics[width=\linewidth]{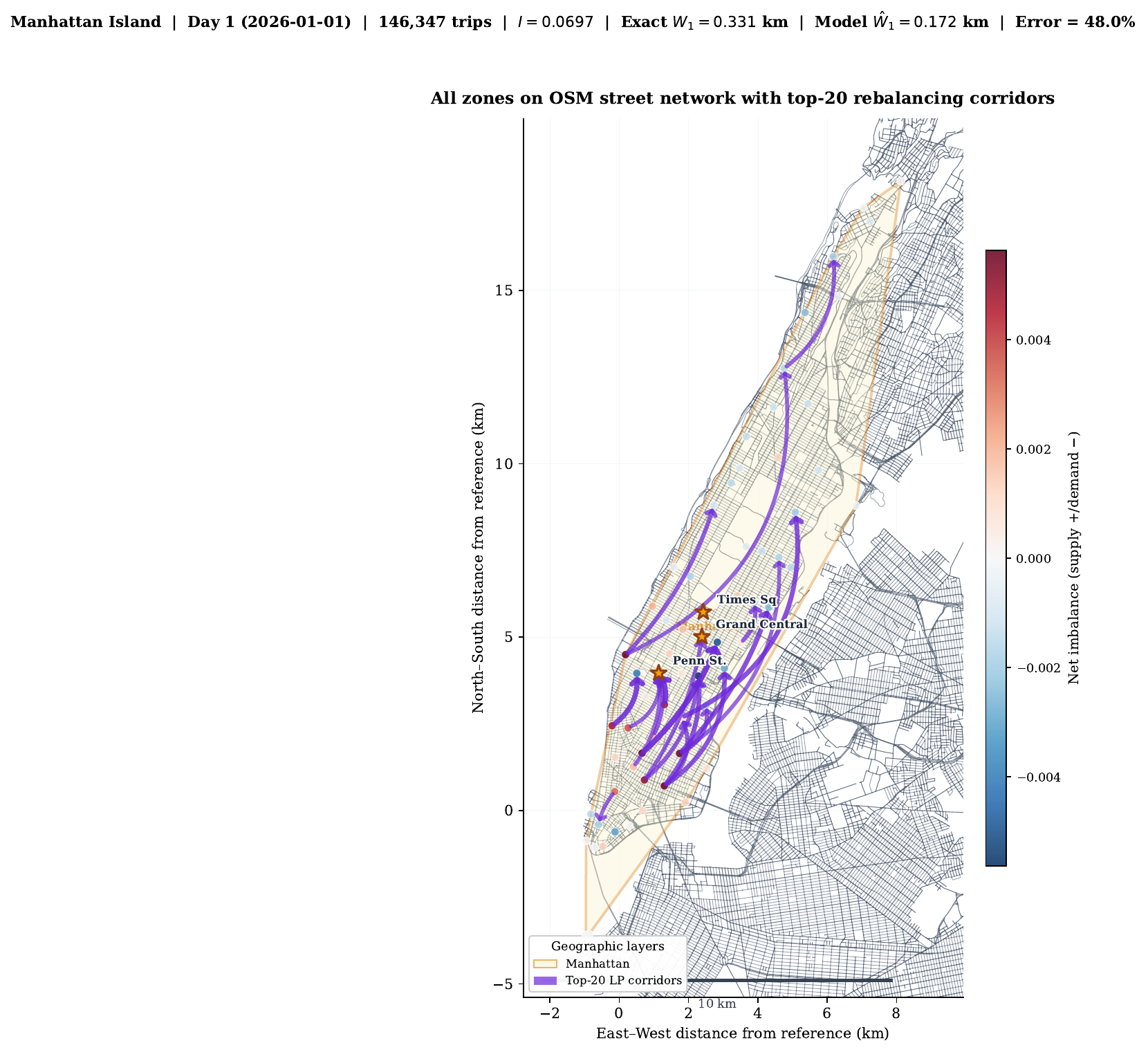}
\caption{\textbf{Manhattan Island --- Day-1 spatial map.}
Only trips with both origin and destination within Manhattan are included.
Zone imbalance and top-20 LP corridors shown on the OSM street network.}
\label{fig:map_man}
\end{figure}


\subsection{Descriptive Statistics}
\label{sec:desc}

Table~\ref{tab:desc} reports descriptive statistics for both scenarios
across the full 31-day study period.
Several features of Table~\ref{tab:desc} deserve comment.
Both scenarios exhibit low absolute imbalance ($I\approx 0.04$).
This reflects the high spatial density of NYC taxi trips (648k trips/day
across 263 zones), which produces substantial local cancellation between
nearby pickups and drop-offs.

The mean $\W^{\rm OSM}$ is 0.372\,km for NYC and 0.114\,km for
Manhattan.
These values appear small, but they are per-unit-probability-mass
distances between normalized distributions, not per-trip distances.
The physically interpretable quantity is $\W/I$, which equals the
average distance a vehicle must travel \emph{per unit of imbalance}:
$0.372/0.038 \approx 9.8$\,km for NYC and $0.114/0.040 \approx
2.9$\,km for Manhattan.
The factor-of-3.4 difference is consistent with the factor-of-3.0
difference in $\sqrt{R}$ between the two scenarios
($\sqrt{1830.2}/\sqrt{197.5}\approx 3.0$), confirming the $\sqrt{R}$
area scaling predicted by the model.


The per-day calibration ratio $\hat{C}_{\rm M}^{(t)} =
\W^{(t)}/(I^{(t)}\sqrt{R}\,g_{\rm M})$ has a mean of 0.1134
for NYC and 0.0839 for Manhattan.
Both values are below the $\hat{C}_{\rm M}=0.1440$ calibrated in the
synthetic numerical study (Section~\ref{sec:numerical}), which used
uniformly random demand on a rectangle.
Real NYC demand is spatially clustered and follows predictable
origin--destination patterns (commute corridors, airport transfers),
creating more local cancellations than uniform random demand and thus
smaller effective $\hat{C}_{\rm M}$.
The lower $\hat{C}_{\rm M}$ for Manhattan (0.0839 vs.\ 0.1134) further
suggests that Manhattan's highly structured demand---concentrated
along transit corridors with predictable inbound/outbound
flows---produces even more efficient local rebalancing cancellations
than the multi-borough NYC pattern.

\begin{table}[H]\centering
\caption{Descriptive statistics over all 31 days, by scenario.}\label{tab:desc}
\begin{tabular}{llrrrr}\toprule
\textbf{Scenario} & \textbf{Metric} & \textbf{Mean} & \textbf{Std} & \textbf{Min} & \textbf{Max}\\\midrule
\multirow{4}{*}{\textbf{NYC}} & Trips $N$ & 648,116 & 118,463 & 310,746 & 851,637\\
 & Imbalance $I$ & 0.0381 & 0.0088 & 0.0278 & 0.0721\\
 & Exact $\W^{\rm OSM}$ (km) & 0.372 & 0.108 & 0.204 & 0.611\\
 & Per-day $C_{\rm M}$ & 0.1134 & 0.0181 & 0.0824 & 0.1553\\
\midrule
\multirow{4}{*}{\textbf{Manhattan}} & Trips $N$ & 164,469 & 38,668 & 76,649 & 250,256\\
 & Imbalance $I$ & 0.0404 & 0.0136 & 0.0288 & 0.0947\\
 & Exact $\W^{\rm OSM}$ (km) & 0.114 & 0.089 & 0.049 & 0.473\\
 & Per-day $C_{\rm M}$ & 0.0839 & 0.0277 & 0.0545 & 0.1620\\
\bottomrule
\end{tabular}\end{table}

\noindent

\subsection{Calibration and Validation}
\label{sec:calib_cm}

The calibration constant is estimated from the 20-day calibration set
using robust median as in the previous section. 
Figure~\ref{fig:ts_both} presents the daily time series of exact
$\W^{\rm OSM}$ (solid lines) and model predictions (dashed lines)
for both scenarios across all 31 days.

\begin{figure}[H]\centering
\includegraphics[width=\linewidth]{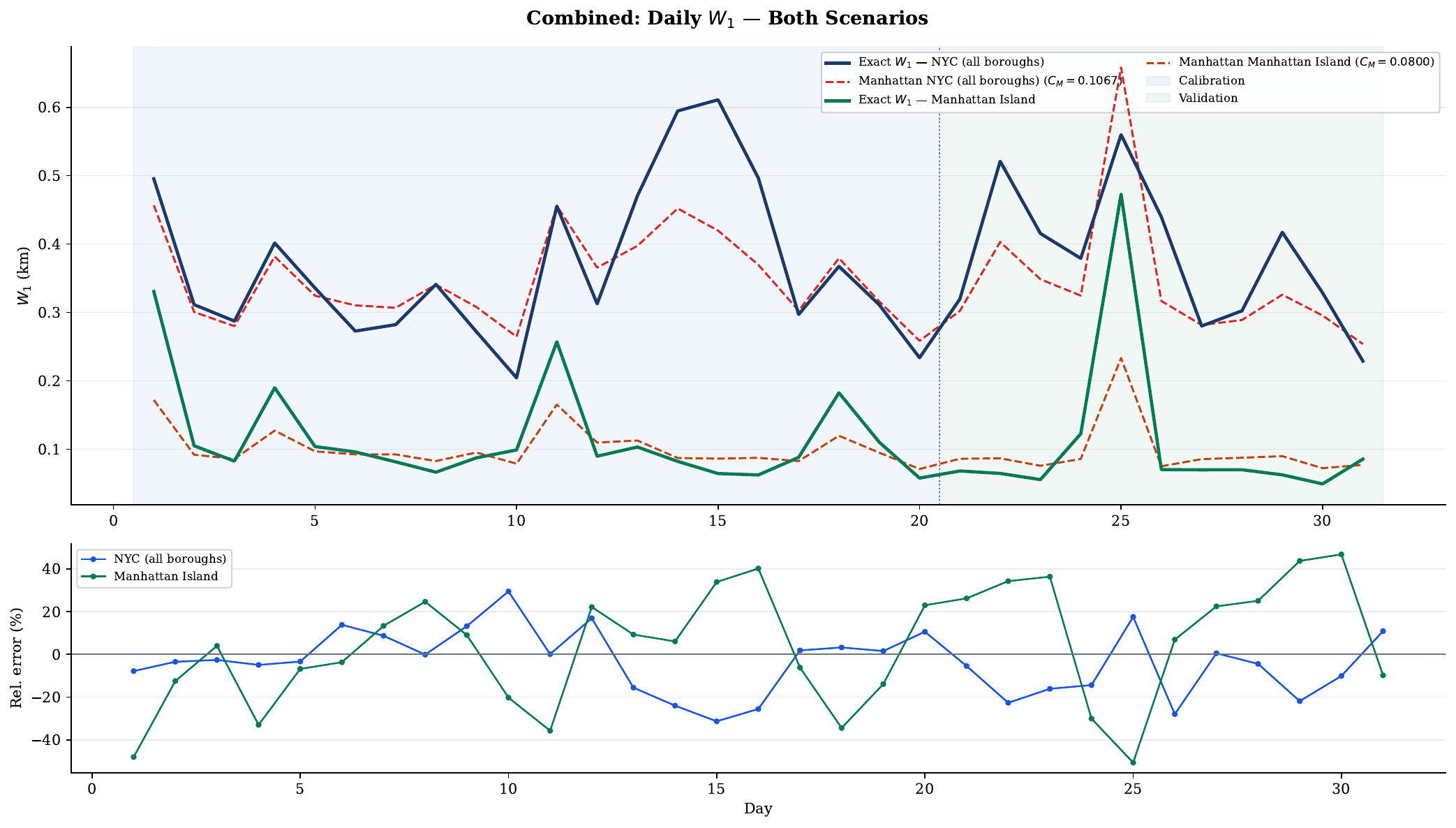}
\caption{\textbf{Daily $\W$ time series for both scenarios.}
Solid markers: exact LP solution on OSM network.
Dashed line: Manhattan Model prediction using calibrated $C_{\rm M}$.
Vertical dashed line separates the calibration (left) and validation (right) periods.
Note the pronounced scale difference: NYC $\W$ is approximately 3.3$\times$ larger
than Manhattan, consistent with the $\sqrt{R}$ area scaling.}
\label{fig:ts_both}
\end{figure}



Both scenarios exhibit recurring weekly cycles, though with opposite
weekend effects.
Two outlier days stand out: New Year's Day (1 January) elevates $\W$
in both scenarios yet the model under-predicts by 7.8\% (NYC) and
48.0\% (Manhattan) because holiday demand produces a spatially
concentrated imbalance not captured by the model's linear
$I$-dependence; and 25 January---a Sunday with a major snowstorm \citep{nws_jan2026_storm}
suppressing Manhattan trips to 54\% below the weekend average---yields
the largest single-day relative error (50.6\%).
These extreme days are identifiable ex post from operational records and
indicate that $\widehat{C}_{\rm M}$ should be re-estimated when demand
conditions deviate substantially from historical norms.

Table~\ref{tab:errors_cs} and Figures~\ref{fig:parity_both} present the full prediction error suite for both
scenarios across both sets. The NYC scenario achieves a calibration MAPE of 10.9\% and a
validation MAPE of 13.8\%, a remarkably small degradation for an
out-of-sample period.
The Manhattan scenario shows higher errors (calibration MAPE 20.0\%,
validation MAPE 30.2\%).
These error levels reflect an inherent limitation of a single-scalar
approximation: the imbalance index~$I$ captures the \emph{magnitude}
of supply--demand mismatch but not its spatial arrangement, so
day-to-day fluctuations in spatial structure contribute to prediction
error independently of~$I$.
The model should therefore be interpreted as an estimator of the
\emph{expected} rebalancing cost for a calibrated demand regime---a
planning and benchmarking tool---rather than a day-to-day forecasting
instrument.
The log-scale $R^2$ values of 0.72 (NYC calibration) and 0.65
(Manhattan calibration) indicate that the model explains approximately
65--72\% of the variance in $\log\W$ using only $I$ as the day-to-day
driver (since $\sqrt{R}$ and $g_{\rm M}$ are constant within a
scenario).
The remaining unexplained variance reflects the spatial structure of
daily demand.
The MBE is positive in both scenarios and both sets: the model
systematically under-predicts $\W$. This bias likely arises because $C_{\rm M}$ was calibrated on synthetic instances
with spatially uncorrelated demand, whereas real NYC trips exhibit spatial correlation.
The scalar index $I$ captures total imbalance magnitude but not its spatial
arrangement.
The models therefore provide a lower bound on operational rebalancing cost
when demand is spatially structured. 

\begin{table}[H]\centering
\caption{Full error suite: Manhattan Model vs.\ exact OSM $\W$.}\label{tab:errors_cs}
\begin{tabular}{lrrrr}\toprule
 & \multicolumn{2}{c}{\textbf{NYC (i)}} & \multicolumn{2}{c}{\textbf{Manhattan (ii)}}\\
\cmidrule(lr){2-3}\cmidrule(lr){4-5}
\textbf{Metric} & \textbf{Cal.} & \textbf{Val.} & \textbf{Cal.} & \textbf{Val.}\\\midrule
MAE (km)            & 0.044 & 0.058 & 0.029 & 0.039 \\
RMSE (km)           & 0.068 & 0.072 & 0.047 & 0.075 \\
{MAPE (\%)}  & ${10.91}$ & ${13.81}$
                    & ${20.00}$ & ${30.21}$ \\
$R^2$(log)          & 0.721 & 0.531 & 0.654 & 0.679 \\
MBE (km)            & 0.018 & 0.036 & 0.015 & 0.012 \\
\bottomrule\end{tabular}\end{table}

\noindent
The parity plots in Figure~\ref{fig:parity_both} display the
point-by-point agreement between model and LP for each day.
For NYC (top row), points cluster tightly around the 1:1 line across
the full range of observed $\W$ values, with most days falling within
the $\pm$10\% during calibration and the majority
remaining within $\pm$20\% in validation.
For Manhattan (bottom row), the spread is wider and a few outlier days
are visible --- particularly in the low-$\W$ (high-volume weekday)
and high-$\W$ (low-volume weekend and holiday) extremes.
The pattern of larger relative errors at extreme $\W$ values is
consistent with the model's use of a constant $C_{\rm M}$: the
calibration median optimally fits the bulk of the distribution but
underfits on anomalous days where the demand structure differs
qualitatively from the average.
\begin{figure}[H]\centering
\includegraphics[width=\linewidth]{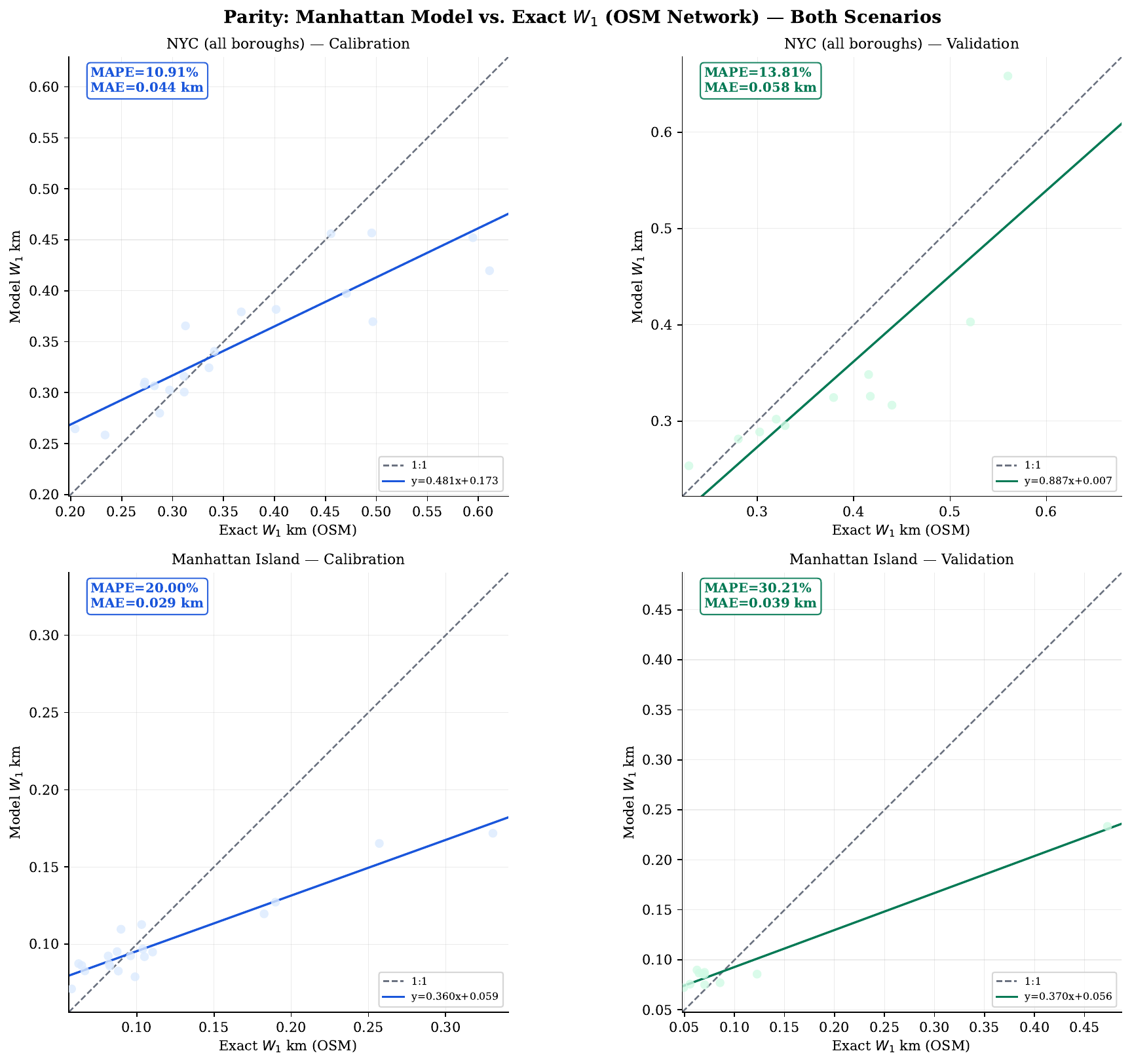}
\caption{\textbf{Parity plots: Manhattan Model vs.\ exact OSM $\W$,
  both scenarios.}
  $2\times 2$ panel grid: rows = scenarios (NYC top, Manhattan bottom),
  columns = sets (calibration left, validation right).
  Each point is one day; dashed line = 1:1 reference. Points above the 1:1 line indicate under-prediction.}
\label{fig:parity_both}
\end{figure}



\subsection{Rebalancing Vehicle-Kilometers}
\label{sec:vkm}

The per-vehicle distance $\W$ is a normalized metric; the
operationally relevant quantity for fleet management is the
total rebalancing VKT, defined as:
\begin{equation}
  \mathrm{VKT}_t \;=\; \W^{(t)} \times N^{(t)},
  \label{eq:vkm}
\end{equation}
where $N^{(t)}$ is the number of trips on day $t$.
This quantity directly translates the model into a cost-planning metric:
it represents the minimum total kilometers that empty vehicles must
travel to restore fleet balance, assuming a perfect routing algorithm.




Table~\ref{tab:vkm_summary} summarises VKT statistics for both scenarios.
The NYC system requires a mean of approximately {238,000\,km of empty
vehicle travel per day} to achieve optimal rebalancing.
Over the full 31-day study period, the total minimum rebalancing burden
reaches {7.38 million km}.
Manhattan in isolation accounts for a mean of 17,350\,km/day
(537,856\,km total), which is 7.3\% of the NYC total despite generating
56\% of all trips.
This disproportionately small VKT share reflects Manhattan's lower
effective $C_{\rm M}$ and smaller service area: the highly dense, short-range
trip structure of Manhattan demand leads to efficient local cancellations
that dramatically reduce the net rebalancing requirement.
Note again that these figures represent the \emph{theoretical minimum} empty travel
implied by the observed demand imbalance, not actual rebalancing
operations (FHV fleets are not centrally rebalanced).
\begin{table}[H]\centering
\caption{Rebalancing VKT statistics by scenario (31 days, January 2026).}
\label{tab:vkm_summary}
\renewcommand{\arraystretch}{1.3}
\begin{tabular}{lll}\toprule
\textbf{Statistic} & \textbf{NYC (all boroughs)} & \textbf{Manhattan Island}\\\midrule
Mean VKT/day (km)    & 238,014 & 17,350 \\
Std dev (km)         &  75,498 & 10,259 \\
Min VKT/day (km)     & 147,825 &  6,369 \\
Max VKT/day (km)     & 410,829 & 48,386 \\
Total over 31 days (km) & 7,378,429 & 537,856 \\
Weekday mean (km)    & 243,115 & 13,595 \\
Weekend mean (km)    & 225,544 & 26,529 \\
VKT ratio (NYC/Manhattan) & \multicolumn{2}{c}{$13.7\times$} \\
\bottomrule\end{tabular}\end{table}

\section{Conclusion}
\label{sec:conclusion}

We derived closed-form approximation models for the minimum rebalancing distance of shared mobility systems in rectangular regions, for both the Manhattan (grid road) and Euclidean (direct) metrics. 
The key novel contributions include: (i) the incorporation of an imbalance index that captures the structured mismatch between supply and demand; (ii) the unified model with a shared shape factor for both Manhattan and Euclidean metrics; and (iii) the validation through extensive numerical studies and a real-world case study against exact LP solutions, demonstrating their practical accuracy and operational relevance. 
These results complement the traditional LP by giving shared mobility operators a simple, theoretically grounded benchmarking formula to guide their medium to long-term operational decisions, while the traditional LP can be used for short-term and case-specific optimization generating exact transport plans. 


Several policy implications can be drawn from our models. First, the $\sqrt{R}$ scaling quantifies the benefit of a common industry practice: partitioning the service area into sub-regions and confining vehicles within each. The model implies that subregion shape affects the minimum rebalancing cost and should be chosen to reflect the demand distribution; for isotropic demand, square regions minimize that cost. Second, the operators can use the model to determine the optimal rebalancing frequency $h$ by balancing the cost of vehicle repositioning against the benefit of reduced imbalance. Third, incentive programs that reduce imbalance $I$ can have a predictable, proportional impact on rebalancing VKT, which can be quantified using the model. Fourth, operators should calibrate specific $C$ for typical demand patterns that differ significantly from each other. {They may also adopt model variants, e.g., the anisotropic model as given Appendix \ref{appx_anisotropic}.} Finally, these models provide a computationally inexpensive benchmark for optimizers; if an operator's extension to new regions cannot profit against this benchmark, it may indicate an non financially viable expansion.

More broadly, the value of these closed-form approximation models extends well beyond
shared-mobility rebalancing. Any systems that must correct a spatial or
network imbalance using limited relocation effort can benefit from the same
approximate analytical logic, including inventory repositioning, humanitarian relief
distribution, warehouse robot coordination, and service-capacity balancing
problems. The proposed models provide a compact
way to link demand heterogeneity, geometry, and relocation cost, which can
help other fields obtain interpretable scaling laws and fast benchmarks without repeatedly solving large optimization models.

Lastly, we acknowledge several limitations and suggest directions for future research. The closed-form approximation, resting on Assumptions A1--A3, summarizes spatial imbalance through the scalar index $I$, which captures the total magnitude of surplus and deficit but not their spatial arrangement. Our numerical results show robust performance across demand distributions with different spatial arrangements, achieved through the calibrated constant $C$ that absorbs this effect. Nevertheless, $C$ must be calibrated empirically for stable demand distributions rather than predicted analytically. Future work could incorporate a spatial correlation term into $I$ to capture the effect of distance between surplus and deficit locations on rebalancing cost. Additionally, the current derivations assume a rectangular service region, which may not reflect real-world areas that are often irregularly shaped. Extending the theory to convex and concave polygons would broaden the models' applicability.
The model also assumes rebalancing cost proportional to transport mass and distance, which excludes economies of scale arising from large-capacity vehicles used for bulk repositioning.
Extending the framework to subadditive cost structures — for example, a hierarchical scheme in which a truck layer handles long-range bulk flows while individual vehicles cover local last-mile repositioning — is a natural direction for future work.

\section*{Acknowledgments}

The authors used AI-assisted tools, including Claude (Anthropic) and GitHub Copilot, to support drafting and editorial refinement during the preparation of this manuscript.

\appendix
\section{An Anisotropic Extension}
\label{appx_anisotropic}
To preserve computational simplicity, we define directional imbalance proxies from the same zonal demand table:
For a grid with columns $c=1,\dots,n_x$ and rows $r=1,\dots,n_y$, let
\begin{align}
  p_c^x = \sum_r p_{cr}, \qquad q_c^x = \sum_r q_{cr}
\end{align}
be the origin and destination shares aggregated by column, and
\begin{align}
  p_r^y = \sum_c p_{cr}, \qquad q_r^y = \sum_c q_{cr},
\end{align}
be the origin and destination shares aggregated by row. Then define
\begin{align}
  \tilde I_x = \frac{1}{2}\sum_{c=1}^{n_x} |q_c^x - p_c^x|,
  \qquad
  \tilde I_y = \frac{1}{2}\sum_{r=1}^{n_y} |q_r^y - p_r^y|.
\end{align}
The anisotropic extension of the Manhattan model is then:
\begin{equation} \label{eq:W1M_anisotropic}
  W_{1,\mathrm{anisotropic}}^{\mathrm{M}}
  \approx
   \tilde C_M^L \tilde I_x L + \tilde C_M^W \tilde I_y W,
\end{equation}
which will be evaluated against the exact LP solution in Section \ref{sec:numerical}.

\section{Derivation of Euclidean Model} \label{appx_proof}

    We derive the Euclidean model in three steps combining exact
    analysis with heuristic scaling arguments.
    Step~1 computes the transport cost for the two dominant Fourier modes
    of $\rho$ using Beckmann's primal flow formulation~\eqref{eq:W1E_Beckmann}.
    Steps~2--3 invoke scaling approximations analogous to
    Assumptions~A1--A3 in the Manhattan derivation to combine the mode
    costs and relate $C_{\rm E}$ to $C_{\rm M}$.

    \bigskip
\noindent\emph{Step 1 — Dominant Fourier mode analysis.}

\smallskip\noindent
As established by Beckmann's flow formulation~\eqref{eq:W1E_Beckmann},
the Euclidean rebalancing cost equals the minimum $L^2$-norm vector flux
satisfying $\nabla\cdot\mathbf{v}=\rho$ with zero normal flux at the boundary.
We approximate this cost by analyzing the two lowest non-trivial Fourier
cosine modes of $\rho$ on $\Omega=[0,L]\times[0,W]$, which form a
complete basis under Neumann boundary conditions.

\medskip
\noindent\textit{Mode $(1,0)$: pure east--west imbalance.}\;
Take $\rho(x,y) = A\cos(\pi x/L)$.
Normalizing via the imbalance index~\eqref{eq:I}
gives $A = \pi I/R$.
Since $\rho$ has no $y$-variation, all transport is one-dimensional
along $x$.
The marginal imbalance profile is
\[
  \rho_x(x) = \int_0^W \rho(x,y)\,\dd y
             = WA\cos\!\Bigl(\tfrac{\pi x}{L}\Bigr)
             = \frac{\pi I}{L}\cos\!\Bigl(\tfrac{\pi x}{L}\Bigr).
\]
The optimal 1-D flow satisfies $\dd v_x/\dd x = \rho_x$ with
$v_x(0)=v_x(L)=0$, giving:
\[
  v_x^*(x)
  \;=\; \int_0^x \rho_x(s)\,\dd s
  \;=\; I\sin\!\Bigl(\tfrac{\pi x}{L}\Bigr).
\]
The transport cost for this mode is:
\begin{equation}
  {\W^{\rm E}}_{(1,0)}
  \;=\; \int_0^L\bigl|v_x^*(x)\bigr|\dd x
  \;=\; I\int_0^L \sin\!\Bigl(\tfrac{\pi x}{L}\Bigr)\dd x
  \;=\; \frac{2IL}{\pi}.
  \label{eq:cost10}
\end{equation}
The cost grows with $L$, i.e.\ with $\kappa$.

\medskip
\noindent\textit{Mode $(0,1)$: pure north--south imbalance.}\;
By symmetry, $\rho(x,y)=A\cos(\pi y/W)$ gives
$v_y^*(y) = I\sin(\pi y/W)$, and:
\begin{equation}
  {\W^{\rm E}}_{(0,1)}
  \;=\; \int_0^W\bigl|v_y^*(y)\bigr|\dd y
  \;=\; \frac{2IW}{\pi}.
  \label{eq:cost01}
\end{equation}
The cost decreases with $\kappa$ (since $W=\sqrt{R/\kappa}\to 0$).

\medskip
\noindent\textit{Combining both modes (heuristic approximation).}\;
When the imbalance $\rho = f - g$ has no preferred spatial orientation
--- as occurs when origins and destinations are not systematically
aligned along either axis --- both modes are excited with comparable
amplitude and neither dominates.
Note that $W_1$ is not a linear functional of $\rho$: the transport
cost of a sum of modes is not in general the sum of the individual
mode costs.  As a heuristic scaling approximation, we treat the two
axis-aligned cost contributions as additive, giving
$\W^{\rm E}\approx\frac{2I}{\pi}(L+W)$.
Absorbing the factor $2/\pi$ and residual higher-mode contributions
into the calibration constant $C_{\rm E}$:
\begin{equation}
  \W^{\rm E}
  \;\approx\; C_{\rm E}\cdot I\cdot(L+W).
  \label{eq:W1E_intermediate}
\end{equation}

\begin{remark}[Approximation scope]
This heuristic rests on two related approximations.
First, the additive combination of the two axis-aligned modes implicitly
assumes that both modes carry comparable amplitude --- i.e., that the
imbalance field has no strongly preferred spatial orientation
({Assumption~A4}: near-isotropy of demand orientation).
For strongly anisotropic demand (e.g., systematic directional flow
concentrated along one axis), $C_{\rm E}$ would need to be split into
axis-specific constants $C_{\rm E}^L$ and $C_{\rm E}^W$, analogous to
the anisotropic Manhattan extension in Appendix~\ref{appx_anisotropic}.
Second, cross-modes $(m,n)$ with $m,n\geq 1$ carry additional transport
cost along oblique directions that is not modelled explicitly; their
contribution is absorbed empirically into $C_{\rm E}$.
For demand patterns with strong diagonal structure (e.g., surplus and
deficit concentrated in diagonally opposite quadrants), the two-mode
approximation is less accurate, consistent with the higher MAPE of the
Euclidean model relative to Manhattan observed in
Section~\ref{sec:numerical}.
\end{remark}

\bigskip
\noindent\emph{Step 2 — The shape factor $g_{\rm M}(\kappa)$.}

\smallskip\noindent
Substituting $L=\sqrt{\kappa R}$ and $W=\sqrt{R/\kappa}$
into~\eqref{eq:W1E_intermediate}:
\begin{equation}
  L + W
  \;=\; \sqrt{R}\!\left(\sqrt{\kappa}+\frac{1}{\sqrt{\kappa}}\right)
  \;=\; \sqrt{R}\cdot\frac{\kappa+1}{\sqrt{\kappa}}
  \;=\; \sqrt{R}\cdot g_{\rm M}(\kappa),
  \label{eq:LplusW_kappa}
\end{equation}
where $g_{\rm M}(\kappa)=(\kappa+1)/\sqrt{\kappa}$ is the
{shape factor} --- the same function that appears in
the Manhattan model (Section~\ref{sec:FinalManhattanModel}).


\bigskip
\noindent\emph{Step 3 — Euclidean model.}

\smallskip\noindent
Substituting~\eqref{eq:LplusW_kappa} into~\eqref{eq:W1E_intermediate}
yields:
\[
  \W^{\rm E}
  \;\approx\; C_{\rm E}\cdot I\cdot\sqrt{R}\cdot g_{\rm M}(\kappa),
\]
which is Eq.~\eqref{eq:W1E}.

\medskip
\noindent\textit{Isotropic-direction heuristic for $C_{\rm E}/C_{\rm M}$.}\;
Consider a random vehicle relocation with Euclidean distance $r$ and
direction $\theta$ uniformly distributed on $[0,2\pi)$ (isotropic demand).
The Manhattan distance for the same trip is
$d_{\rm M} = r\,(|\cos\theta|+|\sin\theta|)$.
Since
\[
  \mathbb{E}_\theta\bigl[|\cos\theta|+|\sin\theta|\bigr] = \frac{4}{\pi},
\]
the expected ratio is
\[
  \frac{\mathbb{E}[d_{\rm E}]}{\mathbb{E}[d_{\rm M}]}
  \;=\; \frac{\pi}{4}
\]
exactly under isotropic demand.
This is an exact result for the single-trip direction distribution; its
extension to the aggregate rebalancing cost is a \emph{heuristic} that
assumes the distribution of optimal-transport trip directions is
approximately isotropic (Assumption~A4 above).
For uniform random points in a rectangle, isotropy holds approximately
(boundary effects introduce slight anisotropy); for instance, direct
computation for the unit square gives the ratio $\approx 0.7821$,
within $0.4\%$ of $\pi/4\approx 0.7854$.
Since both models share the same shape factor $g_{\rm M}(\kappa)$,
this ratio applies approximately to the calibration constants:
\begin{equation}
  C_{\rm E} \;\approx\; \frac{\pi}{4}\,C_{\rm M}
              \;\approx\; 0.785\times 0.1440
              \;\approx\; 0.113.
  \label{eq:CE_CM}
\end{equation}
This prediction is a heuristic; its validity is confirmed numerically:
the calibrated value $\hat{C}_{\rm E}=0.1189$
(Section~\ref{sec:numerical}) is within 6\% of the prediction,
consistent with the near-isotropic demand patterns in the numerical
study and validating Assumption~A4 in this setting.

\bigskip

\bigskip
\bibliographystyle{abbrvnat}

\end{document}